\documentclass[11pt]{article}
\usepackage[dvips]{graphicx}
\usepackage{color}
\usepackage[x11names,rgb]{xcolor}
\usepackage[margin=1cm]{geometry}
\usepackage{fancyvrb,xcolor}
\usepackage{amsfonts}
\usepackage{amsmath, bm}
\usepackage{amssymb}
\usepackage{amsthm}
\usepackage{enumitem}
\usepackage{venndiagram}
\usepackage{booktabs}
\usepackage{multirow}
\usepackage{authblk}
\usepackage[pdftex]{hyperref}
\usepackage{natbib}
\usepackage{fancybox}
\usepackage{hyperref}
\usepackage{authblk}
\usepackage[normalem]{ulem}
\usepackage[font={small},labelfont=bf]{caption}
\usepackage[draft]{changes}
\usepackage{setspace}
\usepackage{hyperref}
\usepackage{parskip}

\hypersetup{colorlinks,            citecolor=blue,            filecolor=blue,            linkcolor=blue,            urlcolor=blue,            }
\renewcommand{\vec}[1]{\mathbf{#1}}

\newtheorem{theorem}{{\bf \sc Theorem}}
\newtheorem{proposition}[theorem]{{\bf \sc Proposition}}
\newtheorem{lemma}[theorem]{{\bf \sc Lemma}}

\newtheorem{corollary}[theorem]{{\bf \sc Corollary}}
\newtheorem{definition}{{\bf \sc Definition}}

\theoremstyle{definition}

\newtheorem{examplex}{Example}
\newenvironment{example}
  {\pushQED{\qed}\examplex}
  {\popQED \endexamplex}
\input{tcilatex}

\begin{document}

\title{Revealing information -- or not -- in a social network of traders\footnote{%
This research was supported by the  
Italian Ministry of Education Progetti di Rilevante Interesse Nazionale (PRIN) grants P20228SXNF and 2022389MRW.
We thank participants at seminars at
Antwerp, Cambridge, Carlos III, IMT Lucca, Siena, Virginia Tech and Wisconsin, as well as participants at the CEPET 2023, $9^{th}$ Conference on Network Science and Economics, SAET 2024, SENA Workshop, and Workshop
on Games on Networks in Singapore.}}

\author[a]{Patrick Allmis}
\author[b,c]{Paolo Pin}
\author[d]{Fernando Vega Redondo}

\affil[a]{University of Cambridge}
\affil[b]{Department of Economics and Statistics,
Universit\`a di Siena, Italy} 
\affil[c]{BIDSA,  Universit\`a  Bocconi, Milan, Italy}
\affil[d]{Department of Economics, The Chinese University of Hong Kong}  
 %

\date{
June 2026}
\maketitle

\setcounter{page}{0}
\maketitle
\thispagestyle{empty}

\abstract{We build upon a simple micro-founded model of asset trading 
proposed by \cite{kyle1985continuous} 
to study under what conditions a trader who is privately informed of the future return of the asset may want to share her information with other traders. Despite what conventional wisdom suggests, 
we show that in the unique equilibrium of the game the informed trader reveals her information with positive probability. 
A consequence of it is that, in contrast with {  the corresponding no-communication benchmark}, the equilibrium price need not be fully revealing of the asset's return, even if traders are risk neutral. This, in turn, has significant implications on the distribution of the social surplus.
While our model initially assumes that inter-agent communication is restricted by an arbitrarily given social network, we also study which such networks arise when links are endogenously formed through traders' prior connection decisions.}

\newpage

\section{Introduction}

There is a lively theoretical literature concerned with the issue of whether, under rational expectations, equilibrium prices may reveal the information that would otherwise lie dispersed across the market participants. Indeed, this question has been addressed within a variety of different theoretical frameworks, the answers being positive or negative depending on a number of key modeling details. When focusing specifically on financial markets, a common implicit assumption is that agents can rely \textit{only} on prices to access (fully or just partially) the information originally held by others. What this implicit assumption has ruled out is the possibility that relevant information might be shared through inter-agent communication before trade is completed. This, however, is hardly something that can be taken for granted in many financial markets where the number of sizable agents is quite small and therefore may enjoy frequent opportunities to ``talk." Couldn't they then have incentives to share payoff-relevant information?  \medskip

Possibly, the intuitive reason why this question has been ignored in the study of financial markets derives from the fact that private information on asset returns is often very valuable if exploited individually. Therefore, it is natural to suppose that no strategic trader in those markets should be inclined to sharing such information with others. In alternative contexts, however, this question has been studied quite exhaustively -- in particular, within the classical framework of Oligopoly Theory. A common approach in this case has been to add to one of the traditional setups (e.g. Cournot or Bertrand) a prior information-sharing stage in which firms can commit to sharing their information with others. This literature will be briefly summarized below. For the moment, it suffices to advance that the aforementioned intuition (i.e.  that no sharing should occur at equilibrium) does apply in some cases but not in others. One of our subsidiary objectives in this paper will be to understand how those results in the Theory of Industrial Organization compare with our own (whose primary focus is on financial markets). \medskip

To address the issue in a transparent manner, we build upon a stylized abstract context proposed by \cite{kyle1985continuous}, where the forces at work are especially clear. It considers a secondary market in which strategic agents trade in an asset/good whose \textit{ex-post} return/value is uncertain. To fix ideas, it can be conceived as a stylized representation of a market for a financial asset (say, government bonds) that is available in some fixed supply.\footnote{The model could also reflect a market for the rights of exploitation of a natural resource of uncertain quality or quantity, or the shares of an IPO for a private firm of uncertain future profitability that turns to being publicly traded.} We depart from the framework in \cite{kyle1985continuous} by allowing strategic communication and by assuming that the demand of the non-strategic sector is common knowledge, so that prices would be fully informative in the absence of communication. Specifically,
before trade takes place among a set of large (strategic) traders, 
one of them 
receives a private signal that informs her of the future return of the asset. 
Then she decides (optimally, by perfectly anticipating subsequent equilibrium behavior) whether to share this signal with her neighbors/friends in the social network of other strategic traders. Once this decision is taken by the informed trader, herself and every other strategic trader simultaneously choose optimally (i.e. also in equilibrium, on the basis of the information each of them holds) the magnitude and sign of their trading position. Given those trades and the aggregate trade induced by large number of infinitesimal non-strategic agents (whose aggregate behavior is captured by a downward-sloping demand function), the market price adjusts to clear the whole market.\medskip


We start our analysis by showing that, at equilibrium, there is \textit{always some positive probability} for the  privately informed agent to reveal the return of the asset to her network neighbors. Intuitively, this occurs when the informed agent has someone to reveal to and the news about this return that the informed agent receives is neither very good nor very bad. For, in either of these two cases, the gain that she can obtain from exploiting privately such ``extreme" information more than offsets the benefit of limiting competition by other strategic traders when the information is shared. We also find that, because of the additional ``decoding uncertainty" induced by the conditional behavior of the party holding the information, {  whenever the informed trader has at least one out--neighbor,} there is always positive probability that the equilibrium price fails to reveal that information.  Thus, returning again to the issue of price revelation, we find that even in the simple Kyle--inspired context we explore where, in the absence of the possibility of revelation, the price would be fully informative, it is precisely such a possibility that renders it always possible that it be non-revealing.\medskip

Naturally, the possibility that information can be strategically revealed must be expected to have distributional implications in our context. For, in effect, the set of all agents (including the non-strategic traders and the original owners of the asset) are involved in a game that is \textit{ex post} zero-sum. We find that this is indeed the case since, whenever the information is revealed, the equilibrium price always falls. Hence, as this happens, the original owners -- and only these -- experience a payoff loss. \medskip 

All of the previous discussion pertains to the case where the social network that channels communication is exogenously given. What would happen if the network is instead endogenously determined through agents' own choices? To address this question, we conclude the paper by studying a network-formation game where, in a first stage, agents simultaneously announce the set of other players they would like to be connected to. When two agents reciprocally announce each other, the link between them is formed and they can then communicate in the ensuing trading game, which proceeds as described before. A sharp result obtains if we apply an extension of the Nash Equilibrium concept, Bilateral Equilibrium, that allows for bilateral deviations (which must obviously be contemplated, if we make the natural assumption that any new link requires bilateral agreement of the agents involved). We show that,  under these conditions, the network can be seen as formed through a dynamic process that yields uniquely a configuration displaying the following simple structure: agents are partitioned into cliques (i.e.~completely connected sub-networks), with traders being selected in a decreasing order in terms of their individual probability of being informed. In the end, this provides a self-enforcing (i.e.~equilibrium) balance between the two opposing forces that determine whether a clique wants to add the best trader yet available. On the one hand, it has to assess the \textit{marginal benefit} of adding an additional source of information. On the other hand, it has to account for the possible \textit{marginal loss} entailed from sharing its full information potential with one other competitor.

\subsection*{Related Literature}

We close this introduction with a brief summary of related literature. First, we refer to the literature that has studied the issue of price-induced revelation in market setups.  A seminal contribution was provided by \cite{grossman1980impossibility}. They showed that, under rational expectations, a competitive market will have the price reveal all relevant information available to market participants. This raised the well-known paradox that, under these conditions, no agent would ever want to incur any cost in acquiring information -- thus, in the end no information would be obtained. To address this paradox, the literature reformulated the question in a context where agents behave strategically, understanding the impact of their decisions on the market price. Early influential papers are \cite{Dubey1987revelation} and \cite{kyle1989schizophrenia}%
, while a good recent survey on the topic can be found in \cite{rostek2020equilibrium}. As mentioned, this literature shows that the ability of equilibrium prices to be a sufficient statistic of the decentralized private information held by the agents depends on detailed characteristics of the model, e.g.~the role played by aggregate or/and individual shocks, the (a)symmetry of agents' signals and their (un)correlation, or whether they may submit demand/supply schedules or only price-independent trades. In our model, which is designed to focus on the analysis of information sharing, price revelation would always materialize if agents did not have the option of sharing their information. Thus, if such revelation fails to happen, it must be because of that possibility.\medskip

The extension of oligopoly theory that, as outlined before, introduced into this classical field the possibility of information sharing has given rise to a large body of literature.  Starting with the papers of \cite{NovSonn1982}, \cite{vives1984duopoly}, and \cite{Gal-Or1985}, the ensuing research has explored a wide range of different scenarios (in terms of the substitutability/complementariy of the goods produced, the nature and object of the uncertainty, the type and relationship of the signals received, etc.). A systematic analysis and comparison of these different scenarios is provided by \cite{Raith1996}. The specific conclusions vary widely, depending on the scenario being considered. But, abstracting from intricate details, what  arises from this analysis is that there is a large range of circumstances for which firms have a strong incentive to share information and therefore, at equilibrium, the firms will choose to do it in the first stage of the extended oligopoly game. Interestingly, one exception where this does \textit{not} happen is in a context that, as we shall see, formally resembles our model, i.e. the standard Cournot model where firms produce a homogeneous good and compete in quantities, with the uncertainty pertaining only to a common-value payoff component of (e.g.~a parameter shifting market demand). The reasons for this sharp contrast with our model (which, as explained, always attributes positive probability to revelation) will be explained at some length once our model has been discussed in detail.\footnote{ Another interesting contrast between our model and classical oligopoly is the following. While in oligopoly the welfare impact of information sharing can be positive or negative (both if we consider total or consumer surplus -- see \citealt{vives1984duopoly}), in our case the total surplus always remains constant. The reason is that, \textit{ex post},  our game is zero-sum, hence the original owners always end up weakly worse off when information is shared (and consequently the price falls) while all the other traders become weakly better off.} \medskip


The paper proceeds as follows. Section \ref{sec:model} sets up the model. Section \ref{sec:analysis} solves the model. Section \ref{S-Info_eff} discusses the informational efficiency of the market at equilibrium. Section \ref{sec:endogenous} studies the networks formation game. Section \ref{sec:conclusion} concludes.

\section{The Model}\label{sec:model}
We analyze a multiperiod game with private information, endogenous information sharing, and trading.
Proceeding backwards in the decision structure of the game, we start by specifying how trade is conducted and profits earned in the last trading stage and then turn to describing how information is received and possibly revealed in the initial information-acquisition (and revelation) stage.  

\subsection{The trading stage}\label{SS-T_stage}
The trading stage is largely inspired in a model proposed by \cite{kyle1985continuous}. It is useful to decompose it into the following two periods.\medskip

In period 1, an amount $Q \geq 0$ of a certain good is made available (i.e. supplied inelastically). To fix ideas, we shall think of this good as an asset that delivers a (possibly negative) random return per--unit $\tilde{V}$ in period $2$. As will become clear later, the equilibrium price is non-negative. In the market for this asset, there are two types of agents active:\medskip
\begin{itemize}[itemsep=4 pt, topsep=0 pt]
    \item a \emph{large} set of non-strategic agents whose aggregate demand is given by $H-K P$, where $H \geq 0 $ is the (possibly random) vertical intercept of their demand function, $K>0$ is a fixed parameter that captures their sensitivity to price changes, and $P$ is the price of the good (which will be determined through a market-clearing condition specified below);
    \item a finite set $\mathbf{N} = \{1,2,...,N\}$ of strategic agents (we simply refer to them as ``traders"), each $i \in \mathbf{N}$ submitting a demand (fixed order) for a \textit{net} quantity  $x_i$. These magnitudes can be negative (in principle, also for non--strategic traders if the price $P$ is high), which is interpreted  as reflecting a situation where traders go short.
\end{itemize}  \medskip

In period 2, the return $V$ of the asset is realized, (as it has already happened for $Q$ and $H$ in period 1), which leads, for  each trader $i \in \mathbf{N}$, to profits
\[
\pi_i = x_i (V - P).
\]
that are proportional to the net position achieved by each $i$ through the trade conducted in period $1$. As advanced, the price is set so as to clear the market:
\[
Q = \sum_{i \in \mathbf{N}} x_i +H -KP  ,
\]
and therefore we have:
\begin{equation} \label{eq:price}
P = \frac{X+H-Q}{K},
\end{equation}
where $X \equiv \sum_{i \in \mathbf{N}}x_i$ is the aggregate net quantity of the asset demanded by the (strategic) traders.

\subsection{Information-Acquisition Stage}\label{SS-IA_stage}
Let us assume that, before the game starts, all agents know the true distribution of the random variable $\tilde{V}$, as well as the deterministic variables $H$, and $Q$ and this is common knowledge. We shall see that the whole payoff-relevant information is fully encapsulated by the random variable $\tilde{r} \equiv K\tilde{V} - H + Q$.\footnote{All results presented in this paper would survive if $\tilde{H}$ and $\tilde{Q}$ were random variables as well (in particular, taking into account any possible correlations that might exist among its constituent variables: $\tilde{V}$, $\tilde{H}$, and $\tilde{Q}$). We make this restriction for exposition, since the payoff relevant information players can share with out-neighbors always concerns the asset's ex-post return. This is relevant only in Proposition \ref{prop_price}.} Thus our analysis will center on it as the only piece of information that agents should care about. 
{  We allow \(\tilde V\), and hence \(\tilde r\), to take  negative real values. Thus,}
we assume that $\tilde{r}$ is distributed according to a probability density function $f(\cdot)$ with convex support $\Omega$ that, for simplicity, we shall identify with the whole real line $\mathbb{R}$. This distribution is assumed to have no points of positive mass (hence its cumulative distribution function $F(r)$ is monotonic and differentiable) and it has a strictly positive mean that we denote by $\bar{r} \equiv E(r)= \int_{\Omega} r dF(r) >0$. \medskip

Suppose that this first stage starts by at most one single agent being informed by Nature of the realization of $\tilde{r}$, which is indicated by $r$. The \textit{ex-ante} probabilities that any given agent $i$ receives that perfectly revealing signal is denoted by $p_i$, and we allow that $\sum_{i \in \mathbf{N}}p_i < 1$  so that it is possible that no agent receives any signal at all.\medskip

Agents are taken to be connected through a directed social network of communication, $\Gamma = \{\mathbf{N}, L\}$, where $L \subset \mathbf{N}\times\mathbf{N}$ and a pair $(i,j) \in L$ means that $i$ ``talks" to $j$ and therefore $i$ can choose to share with $j$ any information she might obtain. The subset of all those other agents who can be informed by $i$ are called her out-neighbors and denoted by $\hat{N}_i$. The cardinality of this set, called the out-degree of $i$ is denoted by $\ell_i$. Reciprocally, we denote by $\check{N}_i$ the set of in-neighbors of $i$, who are the subset of agents that ``talk" to $i$. { 
For tractability}, we assume that if an agent $i$ receives the information from Nature, she has only two revelation decisions: either share it with \textit{all} her out-neighbors in $\hat{N}_i$ or with none at all. { However, in Section \ref{SS-full-game} we explain that our main conclusions would extend to the equilibria of a generalized model  where any informed $i$ may choose to reveal her information to any subset of agents in $\hat{N}_i$.}\footnote{{A different extension worth considering would be to allow for the possibility that agents who receive the information \textit{indirectly} from others may further relay it to their (non-informed) out-neighbors. To discuss this issue, in Section \ref{SS-full-game} we also outline a variation of our theoretical framework that admits this possibility and can be studied in a way analogous to our model.} Note, however, that in many real-world contexts (e.g. in some financial markets) speed is an issue because fresh information tends to become obsolete very fast. An additional support for this assumption will be provided in Section \ref{sec:endogenous}, where we study a network-formation model that endogenizes the communication network. There we will show that, under suitable assumptions, agents organize themselves in cliques (completely connected and disjoint), thus making multi-step relaying of information irrelevant.}\medskip

A graphical description of the setup is offered in Figure \ref{fig:network}.
Its left-hand network represents the ex--ante pattern of information transmission.
Once a trader is informed the effective situation simplifies, as shown in the right-hand network.
The red node in it is the trader who has received the signal and is originally informed, while the green nodes represent traders who can be subsequently informed by the red node. Finally, the remaining traders, represented by the blue nodes, cannot receive the information in any case because the red node cannot access them.\medskip

\begin{figure}
\begin{center}
\includegraphics[height=4cm]{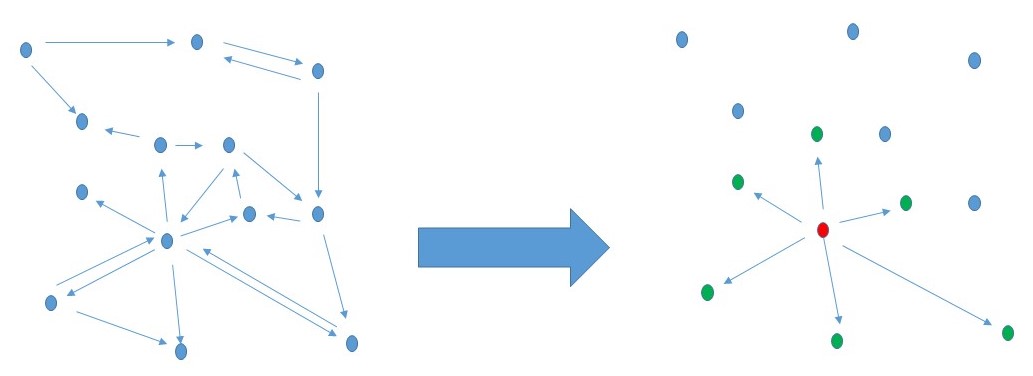}
\caption{A graphical illustration of the informational asymmetry that arises (from left to right) when trader $i$ receives a perfectly revealing signal from Nature. Once this trader (depicted in red) is so informed, she can decide whether or not to reveal the signal to those other traders (in green) to whom she is connected. All others (in blue) cannot receive any information.}
\label{fig:network}
\end{center}
\end{figure}

Finally, note that once an agent $i$ receives the signal and becomes informed, another modeling question that needs to be addressed is \textit{who learns about that event}, i.e. what agents get to know that $i$ has been informed, even if they do not (or cannot) learn the content of the information. 
Our leading assumption in this respect will be that, whenever Nature informs trader $i$, 
all other traders in $N \setminus \{i\}$ observe that $i$ has been informed, 
although only $i$ (and, upon revelation, her out-neighbors in $\hat N_i$) observe 
the realization of $r$. 
An alternative possibility would be to consider a \emph{neighbors-know scenario}, 
where only the out-neighbors in $\hat N_i$ observe that $i$ has been informed. 
However, as we show after Proposition~\ref{prop_characterization}, 
from the viewpoint of the equilibrium allocation and prices, 
both informational specifications are equivalent.

\subsection{The Full Game}\label{SS-Game}
Having described in Subsections \ref{SS-IA_stage} and \ref{SS-T_stage} the information-acquisition and trading stages, an integration of them in this order defines the full Bayesian game played by the (strategic) traders where not only shall we make the assumption that the \textit{ex-ante} distributions are common knowledge but that the same applies to all other details of the interaction specified before -- in particular the social network of communication $\Gamma =\{\mathbf{N},L\}$. To sum things up, it is useful to describe its extensive form as follows.  \medskip
\begin{enumerate}
    \item In the first stage, the following steps are taken in sequence.
    \begin{enumerate}
        \item Nature observes the prevailing value of $r \in \Omega$ and informs at most  a single agent $i$ about it,  according to the probabilities  $\{ p_1, \dots , p_N,\hspace{1 pt}1-\sum_{i \in \mathbf{N}}p_i \}$.
        \item The informed agent $i$ decides whether or not to reveal $r$ to (all or none of) her out-neighbors in the set $\hat{N}_i$. 
    \end{enumerate}
    \item In the second stage, the following actions are chosen simultaneously:
    \begin{enumerate}
        \item The informed agent $i$ chooses $x_i$, depending on the value of $r$ and on whether she revealed it or not in item (b) of stage 1.
        \item Each $j \in \hat{N}_i$ (an out-neighbor of $i$) chooses $x_j$ depending on whether $i$ revealed the information she obtained in stage 1 (in which case $x_j$ may depend on $r$) or not.
        \item Each $k \in \mathbf{N}\setminus \left(\hat{N}_i \cup \{i\}\right)$ chooses $x_k$ 
    \item Each agent $j \in \mathbf{N}$ obtains a payoff 
\[\pi_j =\frac{1}{K} \left(r-\sum_{k \in \mathbf{N}} x_k\right) x_j=\frac{1}{K}\left(KV-H+Q-X\right)x_j = (V-P)x_j.\]
\end{enumerate}
\end{enumerate}\medskip

Building upon the above description of the extensive form of the game, one can specify the general format of a typical strategy for any trader $i \in \mathbf{N}$. Such a strategy $s_i = \left[ g_i(\cdot), x_i(\cdot),\bar{x}_i(\cdot),y_i(\cdot),\bar{y}_i(\cdot),z_i(\cdot) \right]$ includes the following components:\footnote{By way of mnemonic help, the notation "$x$" and "$y$" is respectively used to distinguish 
the trades chosen by the agent who has been informed of $r$ by Nature and those of her out-neighbors. 
A bar on top of these variables signifies that such information has not been revealed, while the absence 
of the bar indicates that it has. Throughout the paper, when trader $i$ is directly informed by Nature, 
we refer to $i$ as the \emph{informed} trader, to $j \in \hat N_i$ as \emph{possibly informed} traders, 
and to $k \in N \setminus (\{i\} \cup \hat N_i)$ as \emph{always uninformed} traders (in the sense that 
they never observe $r$ directly). As will follow from Proposition \ref{prop_characterization} below, when revelation occurs and all 
strategic traders observe $r$, equilibrium behavior becomes symmetric and all agents trade $y_j(i)=\tfrac{r}{N+1}$. 
This coincides with the trade that $z$-traders would choose if they held the correct posterior belief about $r$.}\medskip
\begin{itemize}[itemsep=3 pt, topsep=0 pt]
\item a function $g_i: \Omega \rightarrow \{0,1\}$ that, if $i$ is informed of $r$ by Nature, determines those values for which she reveals it to her out-neighbors ($g_i(r)=1$) or conceals it from them ($g_i(r)=0$);
\item a function $x_i: \Omega \rightarrow \mathbb{R}$ such that if Nature has informed $i$ about $r$ and she has revealed it to her out-neighbors in $\hat{N}_i$, then $x_i(r)$ determines her $r$-contingent trade;
\item a function $\bar{x}_i: \Omega \rightarrow \mathbb{R}$ such that if Nature has informed $i$ about $r$ and she has \textbf{\textit{not}} revealed it to her out-neighbors in $\hat{N}_i$, then $\bar{x}_i(r)$ determines her $r$-contingent trade;
\item a function $y_j: \Omega \times \check{N}_j \rightarrow \mathbb{R}$ such that if  $i$ is an in-neighbor of $j$ who has revealed the value of $r$ to her, then  $y_j(r,i)$ determines $j$'s ($r,i$)-contingent trade;
\item a function $\bar{y}_j: \check{N}_j \rightarrow \mathbb{R}$ such that, if  $i$ is an in-neighbor of $j$ who has \textbf{\textit{not}} revealed the value of $r$ to her, then $\bar{y}_j(i)$ determines $j$'s $i$-contingent trade;
\item a trade $z_k: \big( (N \setminus \{k\}) \cup \{\emptyset\} \big) \rightarrow \mathbb{R}$ chosen by trader $k$ in either of the following cases: 
(i) no trader is informed by Nature; or 
(ii) some trader $i$ is directly informed and 
$k \in N \setminus \big(\{i\} \cup \hat N_i\big)$.
In either case, trader $k$ does not observe the realization of $r$, 
although she may observe the identity of the directly informed trader $i$.
\end{itemize}

\section{Analysis of the Model}\label{sec:analysis}
This section conducts the basic analysis of the model. We start in Subsection \ref{SS-Equil_trading} by characterizing the equilibrium in the simultaneous trading subgames arising from any possible outcome of the initial information-acquisition stage. Subsection \ref{SS-PBE} specifies the conditions defining a Perfect Bayesian equilibrium. Finally, in Subsection \ref{SS-full-game} we extend the equilibrium analysis to the full game that integrates the trading and information-acquisition stages.
\subsection{Equilibrium behavior in trading subgames}\label{SS-Equil_trading}
Proceeding backwards in the analysis of the full game, let us consider first a typical situation (i.e. subgame) reached right after the agents have completed the information-acquisition stage. At that point, they must update their (initially common) beliefs on the value of the prevailing $r$ that, as explained, is the all-encompassing market fundamental. In general, such updated beliefs will end up being heterogeneous across agents in ways that depend on who was informed by Nature in the first stage of the game and the architecture of the communication network $\Gamma$. For each $i\ \in \mathbf{N}$, let us denote by $\tilde{r}_i$ the random variable on $\Omega$ that reflects those induced beliefs. It is on the basis of them that agents choose (simultaneously) their trades in the corresponding subgame being reached. \medskip

Traders are profit-maximizing agents and their key concern is to predict the eventual price for the good, as determined by the random variable:\footnote{Throughout the paper, a tilde over a variable denotes the corresponding random variable when the fundamental value $r$ is not known. The corresponding probability distribution to use in that case will either be specified or clear from the context.}
\begin{equation} \label{eq:price_rand}
\tilde{P} = \frac{\tilde{X}+H-Q}{K}.
\end{equation}
which is simply a random counterpart of the market-clearing condition (\ref{eq:price}). More precisely, every trader $i$ uses all the information derived from the first stage -- in particular, whether  someone she knows (i.e. a neighbor) or herself has been informed and has (or has not) revealed -- to form beliefs on the value of $r$ and on the mapping $X_{-i}\equiv \sum_{j \neq i}x_j: \Omega \rightarrow \mathbb{R}$ that associates to every possible value of $r \in \Omega$ an aggregate demand of all other traders. Naturally, this latter random variable should depend on some conjecture about the strategies pursued by others and also be consistent with her beliefs on $r$ . However, for the moment, let us abstract from these matters (which are dealt with in Subsection \ref{SS-PBE}) and simply denote the induced random variable by $\tilde{X}_{-i}$ and by $\mathbb{E}_i(\cdot)$ the general expectation conditional on the information trader $i$ holds at the trading stage. Then, trader $i$ solves the following optimization problem:
\begin{equation}\label{BestResponse}
    \max_{x_i \in \mathbb{R}} \mathbb{E}_i (\pi_i) = x_i \cdot \mathbb{E}_i \left( \tilde{V} - \frac{x_i + \tilde{X}_{-i} + H - Q}{K}  \right) = \frac{1}{K} \left[ x_i \ \mathbb{E}_i \left(\tilde{r} - \tilde{X}_{-i} \right) - x_i^2\right], 
\end{equation}
and the optimal trade $x^*_i$ is given by: 
\begin{equation}
\label{eq:opt_x}
x_i^*  = \frac{1}{2} \Big( \mathbb{E}_i (\tilde{r}) - \mathbb{E}_i (\tilde{X}_{-i})  \Big).
\end{equation}

Note that, as anticipated in the Introduction, the best-response function given by \eqref{eq:opt_x} exhibits the same formal structure induced by Cournot competition with a linear demand curve and a random intercept, as in the model of \cite{vives1984duopoly}. The reason why,  
despite such a formal parallelism, revelation always happens to be a positive-probability event in our model, while it never arises at equilibrium in the model studied by Vives, lies in the different type of  informational externalities that revelation creates in each case. In Vives' model, if demand information were to be revealed by a Cournot oligopolist, this would reduce market uncertainty overall, which in turn would stimulate aggregate production and deliver a positive externality on buyers (thus, a negative one on firms). Instead, in our model, information revelation across firms reduces overall uncertainty by the (net) buying side of the market, i.e. the strategic traders. Thus, in some cases (specifically, when the information initially obtained is not too extreme), it is worth revealing it to other traders in order to mitigate competition and keep the market price low. In those cases, therefore, the negative externality induced by revelation of information is experienced by the non-strategic seller.

\subsection{Perfect Bayesian Equilibrium of the full game}\label{SS-PBE}
To complete the analysis of the game, we need to specify how traders shape their beliefs on $r$ that underlie equilibrium trading behavior in its second stage, after the first information-acquisition stage has been completed. These beliefs are formed in one of two different ways: 
\begin{itemize}
    \item[(a)] \textit{directly through revelation}, either by Nature or by some in-neighbor in the social network who has been informed by Nature herself;
    \item[(b)] \textit{indirectly through Bayesian updating}, after observing that either (i) some in-neighbor has been informed by Nature but has \textit{not revealed} this information or (ii) no in-neighbor (nor herself) has been informed by Nature.
\end{itemize} \medskip

As usual, to understand the situation prevailing at any point along the game,  we must specify the belief that the deciding trader $i \in \mathbf{N}$ would hold  on the value of $r$ if such a point were reached. In general, those beliefs depend not only on the information set $h$ at which the trader finds herself but also on the \textbf{\textit{conjectured}} strategy profile $\mathbf{s}=(s_1,s_2,...,s_N)= \left[ g_i(\cdot),x_i(\cdot),\bar{x}_i(\cdot),y_i(\cdot),\bar{y}_i(\cdot),z_i \right]_{i\in\mathbf{N}}$  governs the behavior of all traders. In view of these (fully standard) considerations, we denote those beliefs by $\tilde{r}_i(h \vert \mathbf{s}, r)$, conceiving it as a real random variable. Then, if we require that all those beliefs result from a proper application of Bayesian updating and correct strategy conjectures, and that traders behave optimally given those beliefs in all of their information sets, we arrive at a minimalist notion of Perfect Bayesian Equilibria (PBE) that is sufficient in our context.\footnote{We do not contemplate any of the conditions often demanded from off-equilibrium beliefs because all information sets in the second stage of the game are visited with positive probability in any Nash equilibrium. Thus the requirement of perfection (optimal responses to the uniquely induced set of beliefs) is enough to define the equilibrium.} \medskip

To describe formally the PBE notion, the following piece of notation is useful. First we partition the information sets of each agent $i$, denoted by $\mathcal{H}_i$, into (a) the subset $\mathcal{H}^1_i$ that includes the singleton information sets $\left[h^1_i\hspace{-2 pt}=\hspace{-2 pt}\{r\}\right]_{r \in \Omega}$ that arise in the first stage when Nature informs $i$ of the value of $r$, and (b) the subset $\mathcal{H}^2_i$ that comprises  the remaining information sets, all in the second stage. Second, denote by $x_i(h_i \rvert \mathbf{s})$ the trade chosen by agent $i$ at information set $h_i \in \mathcal{H}_i$ when the strategy profile across all traders is $\mathbf{s}$. And third, with a slight abuse of notation, we have 
$\pi_i: S_1 \times S_2 \times \cdots \times S_N \rightarrow \mathbb{R}$   
stand for the payoff function of our game in strategic form, sometimes writing the strategy $s_i$ of an agent $i$ as $(s^1_i,\mathbf{s}^2_i)$, its first and second components corresponding to the choices made in the first and second stages of the game, respectively.\medskip

\begin{definition}
    A Perfect Bayesian Equilibrium (PBE) of our full game is defined by a strategy profile  
\begin{equation}\label{PBE_strat}
     \mathbf{s^*}=\left[ (g^*_i(r))_{r \in \Omega}\hspace{2 pt},\hspace{1 pt}(x^*_i(r))_{r \in \Omega}\hspace{2 pt},\hspace{1 pt}(\bar{x}^*_i(r))_{r \in \Omega}\hspace{2 pt},\hspace{1 pt} \biggl(y^*_i(r,j)\biggr)_{\substack{
r \in \Omega \\
j \in \check{N}_i
}}\hspace{1 pt},\hspace{1 pt} (\bar{y}^*_i(j))_{j \in \check{N}_i}\hspace{1 pt},\hspace{1 pt}z^*_i \right]_{i \in \mathbf{N}} 
\end{equation}
and corresponding trader beliefs $\big\{\left[ \tilde{r}^*_{i}(h_i\vert\mathbf{s^*})\right]_
{h_i \in \mathcal{H}_i}\big\}_{i \in \mathbf{N}}$ such that, for any given $r \in \Omega$ and each $i \in \mathbf{N}$, the following conditions hold:
\begin{itemize}[itemsep=5 pt, topsep=4pt]
    \item[(a)]  At every information set $h_i \in \mathcal{H}_i$, the updated beliefs $\tilde{r}^*_i(h_i\vert \mathbf{s}^*)$ are derived using Bayes rule.
    \item[(b)] At every information set $h^1_i=\{r\} \in \mathcal{H}^1_i$, $s^{1*}_i(h^1_i)=g^*_i(r)$ is optimal, so that   \\
    \hspace*{50 pt} $\pi_i(\mathbf{s^*}) \equiv \pi_i\big((s^{1*}_i,\mathbf{s}^{2*}_i),\mathbf{s}^{*}_{-i}\big) \geq \pi_i\big((s^{1}_i,\mathbf{s}^{2*}_i ),\mathbf{s}^{*}_{-i}\big)$ for $s^{1}_i \neq s^{1*}_i$.
\item[(c)] At every information set $h_i \in \mathcal{H}_i^2$, the trade
$x_i^*(h_i\mid s^*)=\tfrac12\Big(\mathbb E_i\big[\tilde r\,\big|\,h_i\big]
-\mathbb E_i\big[\tilde X_{-i}\,\big|\,h_i\big]\Big)$
is a best response given $i$'s belief and the profile $s^*$; that is (cf.~\eqref{eq:opt_x}), under $i$'s posterior
$\tilde r_i^*(h_i\mid s^*)$ over the state, as pinned down by Bayes' rule in part~(a), and
\[
\tilde X_{-i}=\sum_{j\neq i} x_j\big(h_j(\tilde r,s^*)\,\big|\,s^*\big)
\]
is the aggregate trade of the remaining agents induced by the realized state $\tilde r$ and the profile $s^*$, each trader $j$ acting at \emph{her own} information set $h_j(\tilde r,s^*)$.
\end{itemize}
\end{definition}

Note, from $i$'s standpoint $\tilde r$ is distributed according to her posterior at $h_i$, and this is the distribution under which $\mathbb E[\tilde X_{-i}\mid h_i]$ is computed. 

\subsection{Equilibrium analysis of the full game}\label{SS-full-game}
    
To carry out our study of the full game, some initial observations are in order. First note that, for any given updated beliefs generated in the first stage, the subgames played in the second (trading) stage always display a unique equilibrium. This implies that, if an agent $i$ is informed by Nature and this agent reveals the information to her out-neighbors, both $i$ and all agents $j \in \hat{N}_i$ must take the same action at equilibrium. That is, the trading choices of all of them must satisfy:\footnote{Henceforth, for the sake of notational lightness, we dispense with attaching the star label to equilibrium magnitudes when no confusion is likely to arise.}

\begin{equation}\label{eq:opt_y_reveal}
    y_j (r,i) = x_i (r), \hspace{10 pt} \forall j \in \hat{N}_i.
\end{equation}

To compute such common trade, denote by $Z_i \equiv \sum_{k \notin (\hat{N}_i \cup \{i\})} z_k$ the aggregate trade chosen by all those agents who are neither $i$ nor any of $i$'s out-neighbors. Then, in view of  \eqref{eq:opt_x}, we can write:
\begin{eqnarray}
x_i (r) & = & \frac{1}{2} \left(  r- \ell_i x_i (r) - Z_i \right)= \frac{r-Z_i}{2 + \ell_i},   
\label{eq:opt_x_reveal}
\end{eqnarray}
where recall that $\ell_i$ denotes the out-degree of $i$, i.e. the cardinality of the set $\hat{N}_i$. Instead, when the originally informed agent $i$ chooses not to reveal her information, we can again rely on \eqref{eq:opt_x} to compute her trade, which we have denoted by $\bar{x}_i(r)$, as follows: 
\begin{eqnarray}
\bar{x}_i (r) & = & \frac{1}{2} \left(  r- Z_i - \sum_{j \in \hat{N}_i } \bar{y}_j(i) \right).
\label{eq:opt_x_hide}
\end{eqnarray}
where recall that $\bar{y}_j$ denotes the common action chosen by $i$'s out-neighbors $j \in \hat{N}_i$ when they are kept uninformed. \medskip

Building upon (\ref{eq:opt_x_reveal})-(\ref{eq:opt_x_hide}), we arrive at the following preliminary result.\medskip

\begin{lemma}
\label{lemma_ab}
Suppose agent $i \in \textbf{N}$ is informed by Nature of the prevailing value of $r$. Then, if $\ell_i >0$, there is some \emph{(}nonempty\emph{)} real interval $\left( a_i, b_i  \right)$ such that, at any PBE \ $\mathbf{s}$ of the full game, agent $i$ reveals $r$ to her out-neighbors \emph{(}i.e. $g_i (r) = 1$\emph{)}, if and only if $r \in \left( a_i, b_i  \right) $,\footnote{
Here we disregard the cases where $r$ is equal to $a_i$ or $b_i$, since they correspond to zero-probability events.} where
\begin{eqnarray}
a_i & \equiv & \min \left\{ \frac{2+\ell_i}{4+\ell_i} \left( \sum_{j \in \hat{N}_i } \bar{y}_j(i) \right) + Z_i , \frac{2+\ell_i}{\ell_i} \left( \sum_{j \in \hat{N}_i } \bar{y}_j(i) \right) + Z_i  \right\} \ \ 
\label{eq_ai}
\\
b_i & \equiv & \max \left\{ \frac{2+\ell_i}{4+\ell_i} \left( \sum_{j \in \hat{N}_i } \bar{y}_j(i) \right) + Z_i , \frac{2+\ell_i}{\ell_i} \left( \sum_{j \in \hat{N}_i } \bar{y}_j(i) \right) + Z_i  \right\} \ \ .
\label{eq_bi}
\end{eqnarray}
\end{lemma}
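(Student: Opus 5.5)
Our plan is to compare, for a fixed realization $r$ and fixed PBE continuation play, the informed trader's payoff under revelation with her payoff under concealment, and to show that the difference is a quadratic in $r$ with negative leading coefficient. Write $Y \equiv \sum_{j \in \hat{N}_i} \bar{y}_j(i)$ for the aggregate trade of uninformed out-neighbors. The key observation is that $Z_i$ and $Y$ do not depend on the realized $r$. The traders outside $\hat N_i\cup\{i\}$ never observe $r$ or the revelation decision. The out-neighbors' concealment trade $\bar y_j(i)$ is a single number fixed by the equilibrium, since they only observe that $i$ was informed and did not reveal. A unilateral deviation by $i$ in $g_i(r)$ at one $r$ (a measure-zero change) therefore leaves $Z_i$ and $Y$ unchanged. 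This is exactly what condition (b) of the PBE definition requires, with $\mathbf{s}^{2*}$ held fixed.

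The first step is the revelation payoff. By \eqref{eq:opt_y_reveal} and \eqref{eq:opt_x_reveal}, aggregate strategic demand equals $(1+\ell_i)x_i(r)+Z_i$. Hence
\[
K\pi_i^{R}(r)=x_i(r)\bigl(r-(1+\ell_i)x_i(r)-Z_i\bigr)=\frac{(r-Z_i)^2}{(2+\ell_i)^2}.
\]
The second step is the concealment payoff. By \eqref{eq:opt_x_hide}, $\bar x_i(r)=\tfrac12(r-Z_i-Y)$, so
\[
K\pi_i^{C}(r)=\bar x_i(r)\bigl(r-Z_i-Y-\bar x_i(r)\bigr)=\tfrac14(r-Z_i-Y)^2.
\]
Setting $u=r-Z_i$, agent $i$ reveals if and only if $u^2/(2+\ell_i)^2>(u-Y)^2/4$. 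Equivalently, $2|u|>(2+\ell_i)|u-Y|$.

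The third step solves this inequality. It is equivalent to
\[
\bigl(2u-(2+\ell_i)(u-Y)\bigr)\bigl(2u+(2+\ell_i)(u-Y)\bigr)>0,
\]
that is,
\[
\bigl(-\ell_i u+(2+\ell_i)Y\bigr)\bigl((4+\ell_i)u-(2+\ell_i)Y\bigr)>0.
\]
This is a concave quadratic in $u$, with roots $u_1=\frac{2+\ell_i}{4+\ell_i}Y$ and $u_2=\frac{2+\ell_i}{\ell_i}Y$ (the latter is finite since $\ell_i>0$). It is positive exactly strictly between the roots. Translating back by $Z_i$ gives the interval $(a_i,b_i)$ with the stated min and max, and ties at the endpoints are null events.

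The main obstacle is nonemptiness, which requires $u_1\neq u_2$, i.e.\ $Y\neq 0$. Our first attempt is an equilibrium-consistency argument by contradiction. Suppose $Y=0$. Then the interval is empty, so $i$ never reveals, and each out-neighbor's concealment trade is its unconditional-information trade. Conditioning on the event that $i$ is informed does not change beliefs about $r$, and the expected value $\bar r>0$ should then force $\bar y_j(i)>0$ via \eqref{eq:opt_x}, contradicting $Y=0$.

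Making this rigorous requires solving the linear best-response system of the uninformed traders jointly with $Z_i$. Specifically, we would show that when revelation never occurs, all uninformed traders play a symmetric positive trade proportional to $\mathbb{E}(\tilde r)$ minus the expected trade of $i$. This yields $\bar y_j(i)=\bar r/(N+1)>0$, and hence $Y>0$.
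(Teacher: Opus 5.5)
Your proposal is correct and follows essentially the paper's route: you hold the continuation trades fixed, compare the revelation payoff $(r-Z_i)^2/(K(2+\ell_i)^2)$ with the concealment payoff $(r-Z_i-Y)^2/(4K)$, and take the interval between the two roots of the resulting concave quadratic; your difference-of-squares factorization is just a cleaner way to obtain the same roots. The paper's proof of this lemma does not address nonemptiness ($Y\neq 0$), which it handles in the proof of Lemma~\ref{prop_existence} with the same contradiction you sketch (there, no revelation forces $\bar y_j(i)=(\bar r-Z_i)/(\ell_i+2)>0$ using $Z_i<\bar r$, later verified via $Z_i=(N-\ell_i-1)\bar r/(N+1)$), whereas your joint solution giving $\bar y_j(i)=\bar r/(N+1)$ closes that step directly.
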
\medskip

The above lemma shows that, in every PBE of the game, the revelation decision taken by any  agent $i$ who is informed of  $r$ by Nature is a quite natural one: she reveals that information only if the value of $r$ lies within a certain ``intermediate" interval. Intuitively, this reflects the fact that, if the value of $r$ is too low (less than $a_i$) or too high (above $b_i$), the incentives for agent $i$ to keep that information private are too strong. For she can then take advantage of her private information to, say, unilaterally go short or long if $r < a_i$ or $r>b_i$, respectively.  Instead, if the value of $r$ is moderate (i.e. $a_i \leq r \leq b_i$), $i$'s primary concern is the mitigation of inter-trader competition. And this is best attained by moving into a symmetric informational situation  with her out-neighbors, sharing the news that there are no substantial gains to be reaped by playing aggressively. 
\medskip

Lemma \ref{lemma_ab} provides an important first step in the analysis of the model but does not yet establish the existence of an equilibrium with revelation. In essence, it provides a \emph{necessary} condition for it: if such an equilibrium exists, revelation must take place only within the specified interval for $r$.   But, in principle, there is no guarantee that such an interval -- whose endpoints are determined from (\ref{eq_ai})-(\ref{eq_bi}) by the endogenous trades $(\bar{y}_j)_{j \in \hat{N}_i}$ and $Z_i \ (\equiv \sum_{k \notin (\hat{N}_i \cup \{i\})} z_k$) -- can be made consistent with all other equilibrium conditions. This consistency requires that those trades, which are chosen by \textit{all other} strategic traders,  
represent optimal choices for them when they access no information. 
{ The existence, uniqueness, and characterization of the solution to the resulting fixed-point problem are established in two steps. We begin with the following lemma.}\footnote{%
Because $Z_i$ is determined by the aggregate trade of agents who do not and cannot observe $r$, 
we can equivalently represent the game with a timing in which $Z_i$ is determined prior to the trading stage. 
This allows us to analyze the subgame conditional on a given value of $Z_i$.}
\medskip

\begin{lemma}\label{prop_existence}
    Consider the subgame in which Nature informs some agent $i$ of the realization of $r$.
{     Fix an aggregate trade $Z_i<\bar{r}$ by agents who neither observe $r$ nor receive it through revelation,
as taken as given by agent $i$ and her out-neighbors in this subgame.}
Then the Bayesian best-response strategies in this subgame are {  uniquely} determined. 
Moreover, the informed agent $i$ reveals the information to her out-neighbors with positive probability whenever $\ell_i>0$.
\end{lemma}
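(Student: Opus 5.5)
Fix $i$ with $\ell \equiv \ell_i>0$ and a given $Z \equiv Z_i<\bar r$. The plan is to reduce the whole subgame to a single scalar fixed-point equation and to show that this equation has exactly one root.

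\textbf{Step 1: reduction to a common trade.} In the subgame, the out-neighbors $j\in\hat N_i$ who are kept uninformed all hold the same information: that $i$ was informed and did not reveal. By \eqref{eq:opt_x} their best responses form a linear system that is symmetric across $j$, so they share a common trade $\bar y_j(i)=t$. Write $Y=\ell t$ for the total. Given $t$, the informed agent's trades are pinned down: $x_i(r)=(r-Z)/(2+\ell)$ by \eqref{eq:opt_x_reveal}, and $\bar x_i(r)=\tfrac12(r-Z-Y)$ by \eqref{eq:opt_x_hide}. By Lemma~\ref{lemma_ab}, the revelation set is the interval between $Z+\tfrac{2+\ell}{4+\ell}Y$ and $Z+\tfrac{2+\ell}{\ell}Y$.

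\textbf{Step 2: the scalar equation.} Let $m(t)=\mathbb E[\tilde r\mid \text{no revelation}]$, the conditional mean of $\tilde r$ outside the revelation interval. Plugging $\mathbb E[\bar x_i\mid\cdot]=\tfrac12(m-Z-Y)$ into \eqref{eq:opt_x} for a non-revealed neighbor gives
\[
t=\tfrac12\big(m-Z-(\ell-1)t-\tfrac12(m-Z-\ell t)\big),
\qquad\text{that is,}\qquad
(2+\ell)t+Z=m(t).
\]
Reparametrize by $s\equiv (2+\ell)t+Z$. Then the revelation interval has endpoints $s$ and $a(s)\equiv Z+\tfrac{\ell}{4+\ell}(s-Z)$. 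Its orientation depends on the sign of $s-Z$, equivalently of $t$:
\begin{itemize}
\item if $s>Z$, the interval is $(a(s),s)$;
\item if $s<Z$, the interval is $(s,a(s))$;
\item if $s=Z$, the interval is empty.
\end{itemize}
The equilibrium condition then becomes $G(s)=0$, where
\[
G(s)\equiv \int_{r\notin I(s)} (r-s)\,dF(r),
\]
and $I(s)$ is the interval just described.

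\textbf{Step 3: monotonicity, existence and uniqueness.} This is the main step. I will show that $G$ is continuous and strictly decreasing on all of $\mathbb R$.
\begin{itemize}
\item For $s>Z$, differentiating gives $G'(s)=-F(a)-(1-F(s))+(a-s)f(a)a'$. The last term is nonpositive because $a<s$ and $a'=\ell/(4+\ell)>0$. So $G'<0$.
\item For $s<Z$, one gets $G'(s)=-(1-F(a)+F(s))-(a-s)f(a)a'$, which is negative because now $a>s$.
\item At $s=Z$, the interval is empty and $G(Z)=\bar r-Z>0$, using the hypothesis $Z<\bar r$.
\item As $s\to\infty$, $\int_{-\infty}^{a}(r-s)\,dF\le (a-s)F(a)\to-\infty$, since $a-s=-\tfrac{4}{4+\ell}(s-Z)$. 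The other part is bounded, because $\int_s^\infty (r-s)\,dF\to0$ given a finite mean.
\end{itemize}
Hence there is a unique root $s^*$, and it satisfies $s^*>Z$.

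I expect the delicate points to be the following. First, one must check that the case $t\le 0$, where the $\min/\max$ in \eqref{eq_ai}--\eqref{eq_bi} flip, is genuinely covered by the same $G$, so that no spurious root exists for $s\le Z$; the monotonicity argument above handles this. Second, one must justify the symmetry of the $\bar y_j$: the neighbors' linear best-response system has a unique solution because its matrix is diagonally dominant with weights $\tfrac12$.

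\textbf{Step 4: conclusion.} The root gives a unique $t^*=(s^*-Z)/(2+\ell)>0$, and with it all other trades are uniquely determined by Step 1. The revelation interval $(a(s^*),s^*)$ is nondegenerate, because $a(s^*)<s^*$. Since $f>0$ on $\mathbb R$, this interval has positive probability. So agent $i$ reveals with positive probability.
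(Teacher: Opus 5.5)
Your proof is correct. You reach the same scalar fixed point as the paper, $(2+\ell_i)\bar y_j(i)+Z_i=\mathbb{E}[\tilde r\mid r\notin(a_i,b_i)]$, with the upper endpoint (your $s$, the paper's $b$) as the unknown, but you close it differently.

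\textbf{The paper's route.} It sets $g(b)=b-\Psi(b)$, where $\Psi$ is the mean of $\tilde r$ outside the revelation interval, and works only on $(Z_i,\infty)$, explicitly ``under the assumption that $\bar y_j(i)$ is non-negative.'' Existence comes from the intermediate value theorem. Uniqueness comes from a local single-crossing property: at any fixed point, $\Psi'(b^*)=\alpha f(a^*)(a^*-b^*)/D\le 0$, so $g'>0$ at every zero. Positivity of $\phi_i$ is then proved by a separate contradiction: $a_i=b_i$ forces $\bar y_j(i)=0$, while the best response at $\phi_i=\mu_i=0$ equals $(\bar r-Z_i)/(\ell_i+2)>0$.

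\textbf{Your route.} Your $G$ is the paper's $g$ multiplied by minus the non-revelation probability: $G(s)=-D(s)\,g(s)$. Its derivative, $-D(s)+\alpha f(a)(a-s)$ for $s>Z$ and the mirror expression for $s<Z$, is negative everywhere, not only at zeros. This buys two things:
\begin{itemize}
\item Global strict monotonicity on all of $\mathbb{R}$. This rules out a root with $\bar y_j(i)<0$, the branch where the $\min/\max$ in Lemma~\ref{lemma_ab} flip, which the paper sets aside by assumption.
\item $\phi_i>0$ for free, from $s^*>Z$ and $a(s^*)<s^*$.
\end{itemize}

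\textbf{One correction.} The neighbors' system $\bar y_j+\tfrac12\sum_{j'\neq j}\bar y_{j'}=c$ is not strictly diagonally dominant once $\ell_i\ge 3$, since the off-diagonal row sum is $(\ell_i-1)/2$. It is still nonsingular, with eigenvalues $\tfrac12$ and $\tfrac{1+\ell_i}{2}$. It is cleaner, though, to let $Y=\sum_{j}\bar y_j$ and use $\mathbb{E}[\bar x_i\mid\text{no revelation}]=\tfrac12(m-Z-Y)$. Each neighbor's first-order condition then reduces to $\bar y_j=\tfrac12\big(m(Y)-Z-Y\big)$. This is independent of $j$, and it also handles the nonlinearity coming from $m$ depending on $Y$ through $i$'s revelation rule.

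\textbf{One thing to state explicitly.} Say that, because $I(s)$ is bounded for every $s$ and $f$ has full support, concealment has positive probability against any neighbor trade. So the no-revelation information set is on path and $m$ is pinned down by Bayes' rule. This is the paper's first step, ruling out pooling on revelation, and your reduction uses it implicitly.
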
\medskip


{ Taken together with Lemma \ref{lemma_ab}, the previous lemma establishes the following two conclusions.} The first one is that { any PBE must involve some degree of separation, if there is an informed trader and she has some out--neighbors.} The second one is that { any} PBE must also induce revelation with the following positive probability:

\begin{equation}\label{rev_prob}
\phi_i \equiv \int_{a_i}^{b_i} dF(r) = F(b_i) - F(a_i)
\end{equation}
for any trader $i$ with a nonempty set of out-neighbors in $\hat{N}_i$.\footnote{%
In defining $\phi_i$ and the following endogenous variables, we express dependency on the informed trader $i$ but not on $Z_i$, as would follow from the assumptions of Lemma \ref{prop_existence}. This is without loss of generality because we prove later in Proposition \ref{prop_characterization} that the PBE is unique.\label{note_noZi}} That is, 
{ any} PBE displays as well some partial ``separation" where some signals are revealed (when $r \in (a_i,b_i)$) while others are concealed (those outside that interval).\medskip

Next, we proceed to provide a useful characterization whereby such equilibrium is \textit{explicitly} associated to its corresponding profile $\vec{\phi}=\left(\phi_i \right)_{i \in \mathbf{N}}$ of revelation probabilities. To this end, we rely on the following notation:
\begin{equation*}
\mu_i \equiv \int_{a_i}^{b_i} r \: dF(r), 
\end{equation*}
which allows us to write the conditional expected values of $r$ upon either revelation or not as follows:
\begin{equation*}
    \mathbb{E}(\tilde{r} \vert r \in (a_i,b_i))=\frac{\mu_i}{\phi_i}
\end{equation*}
\begin{equation*}
    \mathbb{E}(\tilde{r} \vert r \notin (a_i,b_i))=\frac{\bar{r}-\mu_i}{1-\phi_i}.
\end{equation*}
and its unconditional (\textit{ex-ante}) expected value as
\begin{equation*}
    \bar{r} \equiv \int_\Omega r \:dF(r) =  \phi_i\frac{\mu_i}{\phi_i} +(1-\phi_i) \frac{\bar{r}-\mu_i}{1-\phi_i}.
\end{equation*}
Building upon these considerations we now { prove uniqueness and} characterize the PBE of our game. 
{ The following result establishes that the fixed-point problem induced by Lemma \ref{lemma_ab} admits a unique solution and therefore the full game possesses a unique Perfect Bayesian Equilibrium.}
\medskip

\begin{proposition}[equilibrium uniqueness and characterization]
\label{prop_characterization}
{ The full game admits a unique PBE. 
If agent $i \in \mathbf{N}$ is informed by Nature of the prevailing value of r and 
$\left( a_i, b_i  \right)$ denotes her revelation interval (interpreted as empty if $\ell_i=0$), 
then the equilibrium strategies are given by the following expressions:}
\vspace{6 pt}
\begin{enumerate}[itemsep = 10 pt, topsep= 5 pt, partopsep = 8 pt]
    \item[(a)]  If $r \notin (a_i,b_i)$, $g_i(r)=0$  and
    \begin{itemize}[itemsep = 5 pt, topsep= 5 pt]
        \item \ $\bar{y}_j(i)  =  \frac{\bar{r} \big((1-\phi_i) \ell_i + (N-1) \phi_i + 2  \big) - (N+1) \mu_i }{(1-\phi_i)(N+1)(\ell_i+2)}$ \ \ for all $j \in \hat{N}_i$;
        \item \ $z_k = \frac{\bar{r}}{N+1}$ \ \ for all $k \in \mathbf{N} \setminus \left(\{i\} \cup  \hat{N}_i  \right)$;  
        \item $\bar{x}_i(r) =\frac{1}{2}\left(r - \sum_{j \in \hat{N_i}} \bar{y}_j(i) - (N - \ell_i -1)\frac{\bar{r}}{N+1}\right) $.  
    \end{itemize}  
    \item[(b)] If $r \in (a_i,b_i)$, $g_i(r)=1$  and
    \begin{itemize}[itemsep = 5 pt, topsep= 5 pt]
        \item \ $y_j(i,r)  = x_i(r) = \frac{1}{2+\ell_i}\left(r - (N - \ell_i -1)\frac{\bar{r}}{N+1}\right)$ \ \ for all $j \in \hat{N}_i$;
        \item $z_k = \frac{\bar{r}}{N+1}$ \  \ for all  $k \in \mathbf{N}\setminus \left(\hat{N}_i \cup \{i\} \right)$.
    \end{itemize}
    \item[(c)] Furthermore, the extreme points of the revelation interval $(a_i,b_i)$ are given by:
    \begin{itemize}[itemsep = 5 pt, topsep= 5 pt]
        \item $a_i  =  \frac{4 \frac{ N-\ell_i-1}{N+1} \bar{r} + \ell_i \frac{\bar{r}-\mu_i}{1-\phi_i}}{\ell_i+4}$
        \item   $b_i  =  \frac{\bar{r}-\mu_i}{1-\phi_i}$.
    \end{itemize}
\end{enumerate}
\end{proposition}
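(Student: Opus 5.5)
The plan is to \emph{decouple} the fixed-point problem. I will first show that the trade $z_k$ of every always-uninformed trader is pinned down independently of the revelation interval. Then I will use Lemma \ref{prop_existence} to handle the remaining subgame given the resulting $Z_i$, and read off $(a_i,b_i)$ from Lemma \ref{lemma_ab}.

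Fix an informed trader $i$ with $\ell_i>0$ and an arbitrary revelation set, and write $m\equiv \mathbb{E}(\tilde r\mid r\notin(a_i,b_i))$.

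\textbf{Non-revelation branch.} By symmetry and uniqueness of the trading subgame, all $j\in\hat N_i$ choose a common $\bar y$. From \eqref{eq:opt_x_hide} we get $\mathbb{E}[\bar x_i\mid \text{out}]=\tfrac12(m-Z_i-\ell_i\bar y)$. Each uninformed neighbor best-responds via \eqref{eq:opt_x}:
\[
\bar y=\tfrac12\bigl(m-\mathbb{E}[\bar x_i\mid\text{out}]-(\ell_i-1)\bar y-Z_i\bigr).
\]
Solving this linear equation should give $\bar y=(m-Z_i)/(\ell_i+2)$ and $\mathbb{E}[\bar x_i\mid\text{out}]=(m-Z_i)/(\ell_i+2)$. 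Hence the conditional expected aggregate trade of $\{i\}\cup\hat N_i$ is $\tfrac{\ell_i+1}{\ell_i+2}(m-Z_i)$.

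\textbf{Revelation branch.} By \eqref{eq:opt_x_reveal}, the aggregate trade of $\{i\}\cup\hat N_i$ is $\tfrac{\ell_i+1}{\ell_i+2}(r-Z_i)$, which is the same affine function of $r$.

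\textbf{Averaging over $r$.} Since both branches share this affine form, the ex-ante expected trade of $\{i\}\cup\hat N_i$ equals $\tfrac{\ell_i+1}{\ell_i+2}(\bar r-Z_i)$, \emph{whatever} $(a_i,b_i)$ is. This is the key observation and, I expect, the main obstacle conceptually. Once it is seen, the circularity between the endogenous interval and the $z$'s disappears.

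\textbf{The uninformed traders.} Let $M=N-\ell_i-1$ and write each $k\notin\{i\}\cup\hat N_i$'s first-order condition:
\[
z_k=\tfrac12\Bigl(\bar r-\sum_{k'\neq k}z_{k'}-\tfrac{\ell_i+1}{\ell_i+2}\bigl(\bar r-Z_i\bigr)\Bigr).
\]
This is a linear system in $(z_k)$. Its coefficient matrix is strictly diagonally dominant, since the own coefficient is $2-\tfrac{\ell_i+1}{\ell_i+2}>1$ while the cross coefficients are $1-\tfrac{\ell_i+1}{\ell_i+2}<1$ in sum scaled appropriately; so it has a unique solution, which must be symmetric. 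Plugging in $Z_i=Mz$ gives $z(\ell_i+2+M)=\bar r$, i.e.\ $z_k=\bar r/(N+1)$. The case where no one is informed, or $\ell_i=0$, is the standard symmetric Cournot computation and gives the same $z_k=\bar r/(N+1)$.

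\textbf{Remaining strategies and interval endpoints.} Since $Z_i=M\bar r/(N+1)<\bar r$, Lemma \ref{prop_existence} applies and gives uniqueness of the remaining strategies. It also yields $\phi_i>0$, so $m=(\bar r-\mu_i)/(1-\phi_i)$ is well defined. Then:
\begin{itemize}
\item The formula for $x_i(r)$ in part (b) follows from \eqref{eq:opt_x_reveal}, and $\bar x_i(r)$ in part (a) follows from \eqref{eq:opt_x_hide}.
\item Substituting $Z_i$ into $\bar y=(m-Z_i)/(\ell_i+2)$ and simplifying should reproduce the stated $\bar y_j(i)$; this is a routine algebraic check.
\item For part (c), substitute $\sum_j\bar y_j(i)=\ell_i\bar y=\ell_i(m-Z_i)/(\ell_i+2)$ into \eqref{eq_ai}--\eqref{eq_bi}. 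The two candidate endpoints become $m$ and $(\ell_i m+4Z_i)/(\ell_i+4)$. Because $\bar y>0$ (to be verified from $m>Z_i$, using that the concealed region contains the upper tail beyond $b_i$), the larger one is $b_i=m$ and the smaller is $a_i=(4Z_i+\ell_i m)/(\ell_i+4)$, as stated.
\end{itemize}

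\textbf{Uniqueness of the full PBE.} Uniqueness of the $z$'s is established above for every subgame, and Lemma \ref{prop_existence} gives uniqueness within the subgame given $Z_i$. Together these yield uniqueness of the whole PBE.
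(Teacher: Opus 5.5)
Your argument is correct and has the same architecture as the paper. You first pin down $z_k=\bar r/(N+1)$ for every trader outside $\{i\}\cup\hat N_i$, so that $Z_i=(N-\ell_i-1)\bar r/(N+1)<\bar r$. You then invoke Lemma~\ref{prop_existence} for the rest of the subgame and read off (a)--(c) from \eqref{eq:opt_x_reveal}, \eqref{eq:opt_x_hide} and Lemma~\ref{lemma_ab}. Your improvement is in the decoupling step. The paper only asserts that $z_k=\bar r/(N+1)$ is the unique best response to posteriors under which each member of $\{i\}\cup\hat N_i$ is expected to trade $\bar r/(N+1)$. Your observation actually proves this. The group's aggregate trade is the affine map $s\mapsto\frac{\ell_i+1}{\ell_i+2}(s-Z_i)$ evaluated at $r$ under revelation, and its conditional mean under concealment is the same map evaluated at $m$. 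Hence its ex-ante mean is $\frac{\ell_i+1}{\ell_i+2}(\bar r-Z_i)$ for every revelation set, which removes the circularity with the endogenous interval.

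Three steps need repair.

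\textbf{Diagonal dominance.} Let $J$ be the all-ones matrix. The coefficient matrix of your linear system on the $M=N-\ell_i-1$ uninformed traders is $I+\frac{1}{\ell_i+2}J$. It is strictly diagonally dominant only when $M<\ell_i+4$, which fails for instance when $N=10$ and $\ell_i=1$. You do not need dominance. Each equation simplifies to $z_k=(\bar r-Z_i)/(\ell_i+2)$, so all $z_k$ coincide, and summing gives $(N+1)Z_i=M\bar r$ uniquely.

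\textbf{The sign of $\bar y$.} Your hint for $\bar y>0$ is circular, because which of the two roots $m$ and $(\ell_i m+4Z_i)/(\ell_i+4)$ is $b_i$ depends on that very sign. Argue instead by contradiction. If $m\le Z_i$, both roots lie weakly below $Z_i<\bar r$, so every revealed value is below $\bar r$. Then $\bar r=\mu_i+(1-\phi_i)m$ forces $m\ge\bar r>Z_i$, a contradiction. This also rules out nonpositive-$\bar y$ fixed points, which the uniqueness argument in Lemma~\ref{prop_existence} excludes only by assumption, since it searches over $b>Z_i$.

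\textbf{Well-definedness of $m$.} The quantity $m=(\bar r-\mu_i)/(1-\phi_i)$ is well defined because $\phi_i<1$ (the interval is bounded and $f$ has full support), not because $\phi_i>0$.
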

\medskip

The previous result provides closed-form expressions indicating how all contingent trades depend on the revelation intervals $(a_i,b_i)$ for every trader $i \in \mathbf{N}$ and therefore on the profile $\vec{\phi}=\left(\phi_i \right)_{i \in \mathbf{N}}$ of corresponding revelation probabilities given by (\ref{rev_prob}). 
There are two features arising from those expressions that are worth highlighting. One is that every trader
$k \in N \setminus (\{i\}\cup \hat N_i)$ who does not observe $r$ (i.e.\ who is not reached by the revelation of the
directly informed trader) trades the same amount,
$
z_k=\frac{\bar r}{N+1}
$,
independently of her network position and, in particular, independently of the identity of the directly informed
trader $i$. This follows directly from Proposition~\ref{prop_characterization}, which pins down $z_k$ as a constant across all such traders.
Note that it is also the equilibrium quantity that they would play in a simpler game where there is no private information at all.

As a consequence, the equilibrium prices and allocations characterized above are invariant to an alternative
informational specification in which these traders do \emph{not} observe the identity of the directly informed agent, if any.
In other words, whether the set $N \setminus (\{i\}\cup \hat N_i)$ knows (or does not know) who is informed, or whether anyone is, is
irrelevant for the equilibrium outcome.
Since Proposition \ref{prop_characterization} shows that the equilibrium allocation and prices are invariant 
to whether traders in $N \setminus (\{i\} \cup \hat N_i)$ observe the identity of 
the directly informed agent, in what follows we suppress this piece of 
meta-information from the notation, without affecting any equilibrium outcome.
\medskip

Another interesting feature of Proposition \ref{prop_characterization} pertains to the revelation intervals $(a_i,b_i)$ of each trader $i \in \mathbf{N}$. On the one hand, we have that $b_i$, the upper extreme of that interval, is equal to $r=  \frac{\bar{r}-\mu_i}{1-\phi_i}$, which coincides with the expected value held by the out-neighbors of $i$ when they observe that this trader does not reveal the realized value of $r$. The fact that such value is the upper bound of the revelation interval means that ``no revelation" amounts to good news relative to what they expect they would have learned if $i$ had chosen to reveal $r$.  
To see this, consider the situation where trader $i$'s out-neighbors have the following information: they know  that Nature has informed trader $i$ but this agent has not revealed that information. Note that when trader $i$ learns of such a value of $r$ she becomes indifferent between revealing it or not because both options lead to the same aggregate trade of her out-neighbors. Thus the induced trades -- including those of $i$ and her out-neighbors -- are all the same in either case and they vary continuously around such value of $r$. This contrasts with what happens at the lower extreme of the revelation interval. At that point, even though at equilibrium the originally informed trader $i$ must of course be indifferent between revealing or not, the induced trades of both $i$ and her out-neighbors are affected by the revelation decision. If the value of $r=a_i$ is revealed, this is ``bad news" for $i$'s out-neighbors, who cut down their trades as compared to what they would have been if they had been kept uninformed. In fact, as we shall see in Section \ref{S-Info_eff}, it is precisely such a discontinuity that leads, in our model, to the conclusion that the market may fail to be informationally efficient.\medskip

Two interesting conclusions follow from Proposition \ref{prop_characterization} as straightforward corollaries. \medskip

\begin{corollary}\label{cor_expect}
Let $\mathbf{s^*}$ be the PBE of the trading game, as given in (\ref{PBE_strat}), and let $\left( a_i^*, b_i^*  \right)$ be the (nonempty) revelation interval inducing the revelation decision $g^*_i(\cdot)$ of any given 
trader $i\in \mathbf{N}$.  Then,
\[a_i^*  < \frac{\mu_i^*}{\phi_i^*} < \bar{r}  < b_i^* 
\hspace{2 pt} 
\]
where $\bar{r}=\mathbb{E}(\tilde{r})$ and $ \frac{\mu_i^*}{\phi_i^*} =  \mathbb{E} (\tilde{r}  | \mbox{$i$ reveals at PBE})$.
\end{corollary}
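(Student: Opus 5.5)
The plan is to obtain the whole chain of inequalities from two facts: the mean of $\tilde r$ conditional on an interval lies strictly inside that interval, and $\bar r$ is a convex combination of the two conditional means. I would not use the explicit formula for $a_i^*$ in Proposition \ref{prop_characterization}(c). I would only use the expression for $b_i^*$ and the identity
\[
\bar r=\phi_i^*\,\frac{\mu_i^*}{\phi_i^*}+(1-\phi_i^*)\,\frac{\bar r-\mu_i^*}{1-\phi_i^*}
\]
stated just before that proposition.

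The first step is to check that $0<\phi_i^*<1$.
\begin{itemize}
\item Positivity follows from Lemma \ref{prop_existence}, since revelation has positive probability whenever $\ell_i>0$. Alternatively, Lemma \ref{lemma_ab} gives a nonempty interval $(a_i^*,b_i^*)$, and $f$ has support $\Omega=\mathbb{R}$.
\item $\phi_i^*<1$ holds because $a_i^*$ and $b_i^*$ are finite real numbers by \eqref{eq_ai}--\eqref{eq_bi}. The complement of the bounded interval $(a_i^*,b_i^*)$ therefore has positive probability, again because the support of $F$ is all of $\mathbb{R}$.
\end{itemize}
These two facts make both conditional expectations $\mu_i^*/\phi_i^*$ and $(\bar r-\mu_i^*)/(1-\phi_i^*)$ well defined.

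The second step establishes $a_i^*<\mu_i^*/\phi_i^*<b_i^*$. Here $\mu_i^*/\phi_i^*=\mathbb{E}(\tilde r\mid r\in(a_i^*,b_i^*))$ is the mean of a nondegenerate distribution supported on $(a_i^*,b_i^*)$, so it lies strictly between the endpoints. Strictness uses that $F$ has no atoms and positive density throughout the interval.

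The third step establishes $\mu_i^*/\phi_i^*<\bar r<b_i^*$. By Proposition \ref{prop_characterization}(c), $b_i^*=(\bar r-\mu_i^*)/(1-\phi_i^*)$. The identity displayed above then says that $\bar r$ is a convex combination, with weights $\phi_i^*$ and $1-\phi_i^*$ both strictly in $(0,1)$, of the two points $\mu_i^*/\phi_i^*$ and $b_i^*$. Step two shows these points are distinct, with $\mu_i^*/\phi_i^*<b_i^*$. Hence $\bar r$ lies strictly between them, which gives $\mu_i^*/\phi_i^*<\bar r<b_i^*$. Combining this with $a_i^*<\mu_i^*/\phi_i^*$ from step two yields the full chain.

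The only delicate point is the strictness and well-definedness in the first step, namely ruling out $\phi_i^*\in\{0,1\}$. This relies on the full-support assumption and on Lemma \ref{prop_existence} / Lemma \ref{lemma_ab}. Everything after that is immediate. As a consistency check, one could verify $a_i^*<\bar r$ directly from the formula in part (c), but this is unnecessary.
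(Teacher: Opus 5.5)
Your argument is correct. It is essentially the three-observation reasoning the paper gives in the text right after the corollary, but the appendix proof is organized differently.

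There the authors argue by contradiction against a reversed ordering $a_i>b_i$ of the endpoint formulas in Proposition~\ref{prop_characterization}(c). Under reversal, $a_i$ would lie above both conditional means $\mu_i/\phi_i$ and $b_i=\frac{\bar r-\mu_i}{1-\phi_i}$, hence above $\bar r$. Yet the explicit expression $a_i=\frac{\ell_i}{\ell_i+4}b_i+\frac{4(N-\ell_i-1)}{(N+1)(\ell_i+4)}\bar r$, together with $b_i<\bar r$ and $\bar r>0$, forces $a_i<\bar r$.

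What the paper's route buys is an orientation check. The out-neighbors' best response makes the expression $\frac{2+\ell_i}{\ell_i}\sum_{j\in\hat{N}_i}\bar y_j(i)+Z_i$ of Lemma~\ref{lemma_ab} always equal to the non-revelation mean. The check confirms that this expression is the larger of the two in \eqref{eq_ai}--\eqref{eq_bi}, i.e.\ the upper endpoint. This is not automatic: the derivation of part (c) picked that root by invoking $\mathbb{E}_j(r\mid r\notin(a_i,b_i))\ge\bar r$, which is itself part of what the corollary asserts.

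You instead take $b_i^*=\frac{\bar r-\mu_i^*}{1-\phi_i^*}$ directly from the proposition. In exchange, your proof is shorter and uses neither the formula for $a_i^*$ nor $\bar r>0$. It is also more careful than the paper about strictness and well-definedness, through your check that $0<\phi_i^*<1$. Prepending the paper's orientation check would make your argument independent of that step of the proposition.
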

\medskip

Corollary \ref{cor_expect} indicates that the prior expected value of the random variable $\tilde{r}$ (i.e., its mean $\bar{r}$) is always contained in the revelation interval and lies above the mean conditional on revelation, $\frac{\mu_i}{\phi_i}$. This simply derives from combing the following three observations: the latter conditional mean obviously belongs to the revelation interval; the upper end of the revelation interval is equal to the conditional mean upon non-revelation (cf. Part (c) of Proposition \ref{prop_characterization}); the unconditional mean $\bar{r}$ is a convex combination of two aforementioned conditional means.\medskip

A possible interpretation of the previous corollary is that, in our model, the mere anticipation that the originally informed trader will \textit{subsequently} reveal the value of $r$ amounts, in expected terms, to receiving ``bad news." To understand this point, the following thought experiment may be useful. Suppose that Nature has informed $i$ of the value of $r$ but the traders in $\hat{N}_i$ are \textit{only} told that $i$ will choose to reveal it to them (i.e. they do not yet get to know the value of $r$). What  would be their expected value of $r$ at that point? It would of course be $\frac{\mu_i}{\phi_i}$, which the corollary indicates is lower than the unconditional expected value $\bar{r}$ they held at the beginning of the game. They become, in other words, more pessimistic than they were at the start of the game. As the other side of the coin, this also implies that if those traders were told instead that $i$ will decide \textit{not} to reveal the value of $r$, their beliefs should turn to being more optimistic. Indeed, this latter conclusion immediately follows from the fact that, as already noted, upon such non-revelation the updated beliefs on $r$ by the out-neighbors of $i$ coincide with $b_i$, the upper end of the revelation interval (which is higher than $\bar{r}$ and therefore positive). In a heuristic sense, this can be described by saying that, for the out-neighbors of $i$, ``no news is good news." This in turn readily implies -- see (\ref{equ_trade_noninfomed}) in the Appendix -- the following corollary.

\medskip

\begin{corollary} \label{cor_pos_trade}
Let $\mathbf{s^*}$ be the PBE of the trading game, as given in (\ref{PBE_strat}). Then, for any $i \in \mathbf{N}$ and any of her out-neighbors $j \in \hat{N}_i$, if $i$ is informed of $r$ by Nature but does not reveal it, at the PBE the induced trade $\bar{y}_j^{\ast}(i) > 0$. 
\end{corollary}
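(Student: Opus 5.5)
The approach is to reduce $\bar{y}_j^{\ast}(i)$ to an explicit expression in $b_i^*$ and $Z_i$, and then use Corollary~\ref{cor_expect} to sign it. When $i$ is informed but conceals $r$, the out-neighbors only learn that $r\notin(a_i^*,b_i^*)$. By Proposition~\ref{prop_characterization}(c) their posterior mean is then $\mathbb{E}(\tilde r\mid r\notin(a_i,b_i))=\frac{\bar r-\mu_i}{1-\phi_i}=b_i^*$.

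First I would write the best-response condition \eqref{eq:opt_x} for an out-neighbor $j\in\hat N_i$ at this information set:
\[
\bar y_j(i)=\tfrac12\Big(b_i^*-\sum_{j'\in\hat N_i\setminus\{j\}}\bar y_{j'}(i)-\mathbb{E}\big[\bar x_i(\tilde r)\mid r\notin(a_i,b_i)\big]-Z_i\Big).
\]
Uniqueness from Proposition~\ref{prop_characterization} lets me treat all out-neighbors symmetrically, so I set $\bar y_{j'}(i)=\bar y$ for every $j'\in\hat N_i$. Next I would substitute the informed trader's concealment trade \eqref{eq:opt_x_hide} and take its conditional expectation, which gives
\[
\mathbb{E}[\bar x_i\mid\text{no rev.}]=\tfrac12\big(b_i^*-\ell_i\bar y-Z_i\big).
\]
Plugging this in and solving the resulting linear equation yields $\bar y(1+\ell_i/2)=\tfrac12(b_i^*-Z_i)$, that is,
\[
\bar y_j^*(i)=\frac{b_i^*-Z_i}{\ell_i+2}.
\]
This is presumably the appendix formula \eqref{equ_trade_noninfomed}, and it should agree with the closed form in Proposition~\ref{prop_characterization}(a) after substituting for $b_i^*$; I would check this as a consistency test.

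The last step is the sign. By Proposition~\ref{prop_characterization}, $Z_i=(N-\ell_i-1)\frac{\bar r}{N+1}$. By Corollary~\ref{cor_expect}, $b_i^*>\bar r$. Hence
\[
b_i^*-Z_i>\bar r\Big(1-\frac{N-\ell_i-1}{N+1}\Big)=\bar r\,\frac{\ell_i+2}{N+1}>0,
\]
using $\bar r>0$. This gives $\bar y_j^*(i)>\frac{\bar r}{N+1}>0$, which is actually stronger than what the corollary claims.

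The only delicate point is the expectation step. The informed trader's concealment trade $\bar x_i(r)$ is linear in $r$, so its conditional mean over the non-revelation region is simply obtained by replacing $r$ with $b_i^*$. I would state this explicitly, since it is the step that turns the fixed-point problem into a scalar linear equation.
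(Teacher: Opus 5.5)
Your proposal is correct and follows essentially the same route as the paper. The paper also reduces the claim to $(\ell_i+2)\bar y_j(i)=b_i-(N-\ell_i-1)\frac{\bar r}{N+1}$ and then uses $b_i=\frac{\bar r-\mu_i}{1-\phi_i}>\bar r$ to conclude $\bar y_j(i)>z_k=\frac{\bar r}{N+1}>0$, which is exactly your strengthened bound; your derivation of the closed form via linearity of $\bar x_i$ in $r$, and your consistency check against Proposition~\ref{prop_characterization}(a), simply make explicit what the paper takes from \eqref{equ_trade_noninfomed}.
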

\medskip

Along the lines of the previous discussion, Corollary \ref{cor_pos_trade} can be heuristically described as suggesting that when an originally informed agent $i$ decides not to reveal her information, this is interpreted by her outneighbors $j$ as ``good enough news" for their trading a positive amount at equilibrium. In fact, a similar
statement applies to all those other traders $k \in \mathbf{N}\setminus \left(\hat{N}_i \cup \{i\} \right)$ who do not observe the realization of $r$. For, in view of Part (b) of Proposition \ref{prop_characterization}, we know that these latter agents trade a fixed amount $z_k = \frac{\bar{r}}{N+1}$, which is also positive. So, in the end, we conclude that receiving no information on $r$, \textit{for whatever reason}, never dissuades traders from buying some positive amount of the asset. This in turn also implies a positive market price for the traded asset at equilibrium. \medskip

The discussion of the previous two corollaries has revolved around the question of how information revelation, or the absence of it, shapes updated beliefs and the induced trades. Next we turn our attention toward exploring how this same issue  depends on the communication network. We start the discussion by focusing on the simple and transparent context where the network is complete and every trader conveys her information to all others.


\begin{example}
\label{ex_N-1=S}
Consider the situation where some given $i \in \mathbf{N}$ has been informed of the value of $r$ by Nature. Then, in view of Proposition \ref{prop_characterization}, we know that the PBE of the game prescribes the following:
\begin{eqnarray}
\bar{y}_j & = & \frac{1}{N+1} \cdot \frac{\bar{r}-\mu_i}{1-\phi_i} > 0 \hspace{2 pt} , \nonumber \\
a_i & = & \frac{N-1}{N+3} \cdot  \frac{\bar{r}-\mu_i}{1-\phi_i} 
 \hspace{2 pt} ,
\nonumber \\
b_i & = & \frac{\bar{r}-\mu_i}{1-\phi_i} \hspace{2 pt} . \nonumber
\end{eqnarray}
Note that as $N = \ell +1$  grows, the ratio $\frac{a_i}{b_i}$ also grows, the \textit{ex-ante} probability that any informed trader $i$ reveals her information eventually shrinking to zero. 
\end{example}

The above example is a stylized manifestation of the following intuition. As the number of out-neighbors of the originally informed trader grows, the gains obtained by the latter from revelation (as it induces moderation on the out-neighbors' trading behavior) are progressively offset by the losses resulting from the entailed information leakage. In practice, this amounts to a shrinkage of the range of conditions for which revelation is optimal (i.e. a shortening of the revelation interval at equilibrium). When the network is incomplete or/and irregular, such trade-off must still be at work but its precise analysis becomes quite complex. For, in this case, it depends on the particular distribution of the random variable $\tilde{r}$ and its interplay with the details of the (potentially intricate) architecture of the communication network. \medskip

The previous discussion raises three questions on how to model the communication network. One of them, of course, is how the network forms in the first place, possibly before actual trading occurs. As advanced, this will be addressed in Section \ref{sec:endogenous}. But even if, as we have done so far, the network is taken as exogenously given, { two} other related questions also merit attention: \\
{ (a) Can we allow for the possibility that an originally informed agent may decide to convey the information only to a \textit{strict} subset of her out-neighbors? If feasible, the informed agent may indeed welcome this possibility if she had ``too many" out-neighbors (who, upon revelation, would become equally well informed competitors).} \\  
(b) Can we extend the model to give the traders who have been informed of the value of $r$ only indirectly (i.e. not by Nature) the option of relaying that information to her out-neighbors? This would be a natural feature to consider if, in contrast with what we have implicitly supposed, information does not become obsolete very fast. \medskip

{ Concerning question (a), first we claim that, even though the equilibria of the induced game would typically be much more complex than in our case, any of these equilibria would have a revelation structure that is a natural extension of the form displayed by the equilibrium characterized in Proposition \ref{prop_characterization}. In particular, the revelation decision of any given trader $i$ who is originally informed must be determined by (i) a \textit{finite} collection of \textit{bounded} intervals, each of them associated to a corresponding number of outneighbors to whom $i$ chooses to reveal her information when $r$ falls in it, complemented by (ii) a pair of unbounded intervals of the form $(-\infty,\hat{a}_i)$ and $( \hat{b}_i,\infty)$ such that if $r$ falls in either of them trader $i$ does \textit{not reveal} any information. Furthermore, without loss of generality, we can suppose that all those intervals are disjoint and the closure of their union covers the whole real line. Then we argue that most of the main insights of our model continue to hold. By way of illustration, in the next subsection we consider how the generalization contemplated here affects our key conclusion that the market fails to be informationally efficient. There we shall explain why it still applies if originally informed traders can control the number of their neighbors who gain access to their information. That is, even in that case, the equilibrium price does not \textit{ex post} reveal the whole information (i.e. the exact value of $r)$ that is collectively held by all market participants. }

\medskip

Concerning question (b), it is useful to address it by an extension of the model where revelation opportunities arise sequentially, each of them indexed by some $s=0,1,2,...$. At $s=0$, Nature reveals the value of $r$ to a randomly selected trader, just as in our baseline model. For all other subsequent $s\geq 1$, the agents who were just informed in the previous round $s-1$ are given the option -- if there are several of them, one at a time in some pre-specified order (e.g. index-based) --  whether they want to reveal their information to their not yet-informed out-neighbors. The process is bound to end at some (endogenous) finite $\hat{s}$, after which no further revelation occurs. \medskip

In the induced multi-stage finite game the players who have to take a revelation decision at any given round all know $r$. Therefore, if we assume for simplicity that the equilibrium played is in pure strategies and the information conveyed includes the identity of the trader who originally learned the value of $r$ from Nature, all deciding agents at any given $s$ enjoy common knowledge of the situation and can perfectly predict the ensuing behavior. This, of course, applies as well to all agents deciding at any later or earlier round, and in particular to the originally informed trader who is the one who decides whether to launch or not the communication wave in the first place. In view of it, we can describe the situation as one where there is just one round of communication on an artificially constructed network where the links connect the original holder of the information to all those traders to whom this information would reach in the multistage framework if that originally informed trader were to reveal her information. The analysis can then proceed as in the baseline model, the induced conclusions therefore being formally analogous as those obtained so far but having a multistage interpretation. 


\section{Informational efficiency}\label{S-Info_eff}

In this section we address the issue that (in line with the terminology often used in the literature) we have labelled before as the informational (in)efficiency of the market. In our context, it amounts to answering the question of whether or not the equilibrium price reveals the value of $r$ learned by the single trader $i$ who is originally informed by Nature. To this end, we need to understand how that price depends on the ``signal" $r$ that $i$ receives. More specifically, we have to check if the induced mapping is injective. Assuming for simplicity that $H$ and $Q$ are fixed and commonly known and therefore only $V$ is random and ex-ante unknown, the conclusion is as stated by the following result.

\begin{proposition}
\label{prop_price}
Under the specified conditions, in the unique equilibrium of the game, once Nature has chosen the directly informed trader $i$, the realized price under revelation is strictly lower than the price we would observe under no revelation.
Moreover, there exist two realizations $r'$ and $r''$, with $r' < a_i <  r'' < b_i $ such that the price corresponding to any realization of $r \in (r',a_i)$ is equivalent to the price corresponding to one and only one realization of $r \in (a_i,r'')$. 
\end{proposition}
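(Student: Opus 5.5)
The plan is to compute the equilibrium price explicitly as a function of $r$, using the closed forms in Proposition \ref{prop_characterization}, and then study the shape of that function. Because $H$ and $Q$ are fixed, the market-clearing relation \eqref{eq:price} shows that $P$ is a strictly increasing affine function of aggregate strategic demand $X$. So every comparison of prices reduces to a comparison of $X$.

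Fix the informed trader $i$ with $\ell_i>0$ (otherwise there is no revelation and nothing to prove). Write $Z_i=(N-\ell_i-1)\frac{\bar r}{N+1}$ and $Y_i=\sum_{j\in\hat N_i}\bar y_j(i)=\ell_i\bar y_j(i)$.

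Under revelation, part (b) of Proposition \ref{prop_characterization} gives
\[
X_R(r)=(\ell_i+1)\frac{r-Z_i}{\ell_i+2}+Z_i .
\]
Under concealment, part (a) gives
\[
X_N(r)=\bar x_i(r)+Y_i+Z_i=\tfrac12\,(r+Y_i+Z_i).
\]
Both are affine in $r$, with slopes $\frac{\ell_i+1}{\ell_i+2}$ and $\frac12$ respectively. Their difference $X_R-X_N$ therefore has slope $\frac{\ell_i}{2(\ell_i+2)}>0$.

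The key step is to locate the zero of $X_R-X_N$ at $b_i$. I would verify, using part (c) of Proposition \ref{prop_characterization} (where $b_i$ is the out-neighbors' posterior mean upon non-revelation) together with the formula for $\bar y_j(i)$, that $\bar y_j(i)=x_i(b_i)$. This is the continuity-at-$b_i$ property discussed after the proposition. It implies $\bar x_i(b_i)=x_i(b_i)$ and hence $X_R(b_i)=X_N(b_i)$. Since the difference is strictly increasing and vanishes at $b_i$, it follows that $X_R(r)<X_N(r)$ for all $r<b_i$, and in particular on the whole revelation interval $(a_i,b_i)$. This proves the first claim: for each realization in which revelation occurs, the price is strictly below the price that would arise at the same $r$ without revelation.

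For the second claim, I would describe the equilibrium price path $r\mapsto P(r)$:
\begin{itemize}
\item on $(-\infty,a_i)$ it equals $X_N$ (slope $\frac1{2K}$);
\item at $a_i$ it jumps down, from $X_N(a_i)$ to $X_R(a_i)$;
\item on $(a_i,b_i)$ it equals $X_R$, strictly increasing, and reaches $X_R(b_i)=X_N(b_i)$;
\item beyond $b_i$ it continues continuously as $X_N$.
\end{itemize}
Since $X_N(b_i)>X_N(a_i)$ because $b_i>a_i$ (Lemma \ref{lemma_ab}, Corollary \ref{cor_expect}), the intermediate value theorem yields a unique $r''\in(a_i,b_i)$ with $X_R(r'')=X_N(a_i)$. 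Likewise, strict monotonicity and unboundedness of $X_N$ give a unique $r'<a_i$ with $X_N(r')=X_R(a_i)$.

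Now take any $r\in(r',a_i)$. Then $X_N(r)\in(X_R(a_i),X_N(a_i))$, and strict monotonicity of $X_R$ on $(a_i,r'')$ gives exactly one $\hat r\in(a_i,r'')$ with $X_R(\hat r)=X_N(r)$, i.e.\ the same price. Prices outside these two intervals cannot coincide with these values, because the price path is strictly increasing on each of its pieces.

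The main obstacle I expect is the algebraic identity $\bar y_j(i)=x_i(b_i)$. It requires substituting $b_i=\frac{\bar r-\mu_i}{1-\phi_i}$ into $\frac{1}{2+\ell_i}(b_i-Z_i)$ and matching the result with the expression for $\bar y_j(i)$ in part (a). If that direct matching becomes messy, my fallback is to derive it from the out-neighbors' best-response condition $\bar y_j=\frac12\big(b_i-(\ell_i-1)\bar y_j-Z_i-\mathbb E[\bar x_i\mid\text{no rev.}]\big)$, where $\mathbb E[\bar x_i\mid\text{no rev.}]=\frac12(b_i-Y_i-Z_i)$.
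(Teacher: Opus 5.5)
Your proposal is correct and follows essentially the same route as the paper's proof: write the price as an affine function of $r$ (equivalently $V$) in the concealment and revelation regimes, note that the revelation branch is steeper and that the two branches coincide at $b_i$, and conclude that the price jumps down at $a_i$, which yields $r'$ and $r''$. Your explicit check of $\bar y_j(i)=\frac{b_i-Z_i}{\ell_i+2}=x_i(b_i)$ and your intermediate-value argument make rigorous steps the paper only asserts; in particular, the paper infers the coincidence at $b_i$ directly from $b_i$ being the out-neighbors' posterior mean.
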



In light of the benchmark inspired by \cite{kyle1985continuous}  discussed in the Introduction, Proposition \ref{prop_price} leads to the important insight that, if one allows for the possibility that traders may reveal their information to out-neighbors when they find it optimal to do so, the informational efficiency of the market breaks down. Conditional on any decision on revelation (reveal or not), the equilibrium-induced relationship between $r$ and the price is strictly monotone. Hence the prevailing price unambiguously identifies the value of $r$. However, when the revelation decision is endogenous and therefore can be affected by $r$, such a monotonicity no longer holds and hence the informational efficiency of the market fails. For, in this case, even a small increase in $r$ that leads the originally informed agent to reveal her information can substantially (i.e. discontinuously) impact her out-neighbors behavior. They trade less aggressively, which in turn leads to a lower equilibrium price. Other traders, however, who can only observe the price cannot know whether this happens because  $r$ is within the revelation interval (i.e., in the interval $(a_i, r'')$ in Figure \ref{fig:pV}) or below it (i.e., in the interval $(r',a_i)$).

\medskip



\begin{figure}
\begin{center}
\includegraphics[height=6cm]{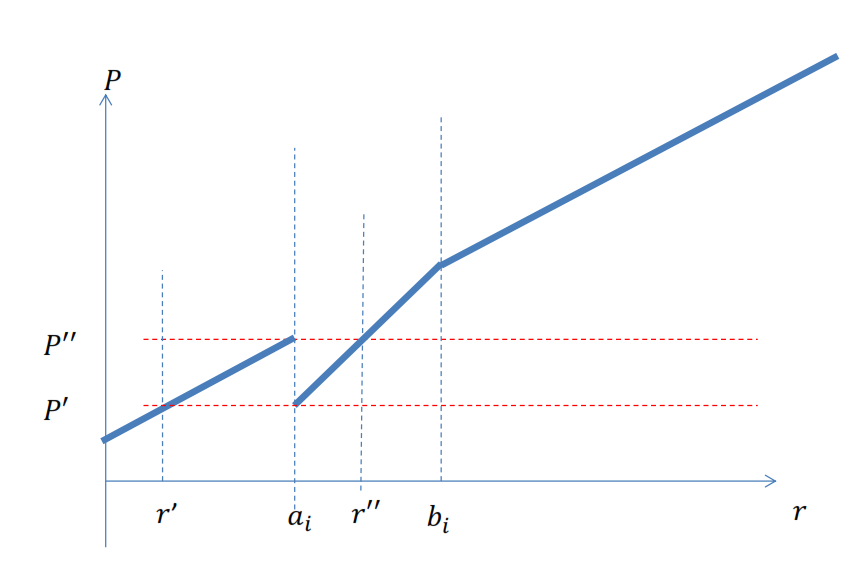}
\caption{Dependence of price from the realization of $r$.}
\label{fig:pV}
\end{center}
\end{figure}

As noted above, the informational inefficiency of the market extends to the generalized model where any originally informed agent $i$ can decide the number of out-neighbors to whom she reveals the realized value of $r$. To see this, it is enough to focus on the value $\hat{a}_i$ that marks the upper bound of the interval $(-\infty,\hat{a}_i)$ where $r$ is so low that $i$ chooses not to reveal it to any of her out-neighbors. Then, for some sufficiently small $\epsilon>0$, the equilibrium price for $r = \hat{a}_i+\epsilon$ must lie below that induced for $r = \hat{a}_i-\epsilon$, since the informed out-neighbors will reduce their trade in the former case (for they are becoming informed) as compared to the latter case (where they are not).\footnote{Note that here we rely on the assumption that any such revelation is a bilateral affair which is private to the two traders involved.} Thus, by a direct adaptation of the argument used for our baseline model, the observation of an equilibrium price in the interval $(P',P'')$ does not identify uniquely the realized value of $r$.\medskip

\section{Endogenous network}\label{sec:endogenous}

We now address the issue of how networks among traders form. Since it is natural to conceive inter-trader links as built on trust, we model network formation as a relatively stable social structure that is formed \textit{ex ante} before any particular information is realized. Also, we view trust as being bidirectional and therefore simplify the analysis by restricting to networks where every link is reciprocated (and therefore undirected).
\medskip

We rename players $\{1,2,...,N\}$ such that $p_1> p_2>...> p_N$. We call this the \textit{probability ordering}, which is common knowledge. For expositional purposes, we restrict the discussion to a context where there is strict ordering of such  probabilities but consider the possibility of ties in the Appendix. 
We say $i>j$ if $p_i>p_j$. \medskip

Some additional notation is required. Denote by $G$ an adjacency matrix that describes who can reveal information to whom. Let $G_i=(G_{i1},\ldots,G_{iN})\in\{0,1\}^{N}$ be the $i$th row of matrix $G$, which describes to whom $i$ can reveal information. We write $G_{ij}=1$ if $i$ can reveal information to $j$ and $0$ otherwise. By convention, we always impose $G_{ii}=0$.

Let $\mathcal{G}$ denote the set of \textit{symmetric matrices} in the above family, where $G_{ij}=G_{ji}$. Since links are assumed undirected, our network-formation model posits that they form bilaterally by bilateral consensus. 


We call a network $G$ \textit{ordered} if for any $i,j,k\in N$ with $p_i> p_j> p_k$, $G_{ik}=G_{ki}=1$ implies $G_{jk}=G_{kj}=1$ and $G_{ji}=G_{ij}=1$. 
Define a \textit{clique} $C(G)$ such that for any $i\in C(G)$, $G_{ij}=1$ if $j\in C(G)\setminus\{i\}$ and $G_{ij}=0$ if $j\not\in C(G)\setminus\{i\}$.\medskip

Trader $i$ knows everyone's probability of being directly informed and the distribution $f(r)$. 
As in the model with an exogenous network, whenever Nature chooses the directly informed trader $i$, the out-neighbors of $i$ know $i$ is directly informed. Proposition \ref{prop_existence} establishes equilibrium uniqueness of the trading game for any given network. Hence, strategic traders can infer all expected profits for any network. From an \textit{ex--ante} point of view, the network describes an allocation rule that assigns an expected profit to each strategic trader in the cases where she is the \textit{directly informed}, an \textit{out-neighbor} of the directly informed trader, or \textit{isolated}. The \textit{ex--ante} probabilities of players, together with the network, define the probabilities for each player to be in the respective position. 
We can thus interpret a network in the following way. 


\begin{definition}
    For any possible network $G\in\mathcal{G}$, define an allocation rule $\lambda:G\to \mathbb{R}^N$ that assigns each player $i$ an expected ex--ante payoff, $E(\pi_i)$.
\end{definition}

The allocation rule describes the expected payoffs of players for a particular network configuration. Since players infer their expected payoffs from the network, we only need to consider the linking strategies when considering stability of the network. Denote by $\sigma_i=(\sigma_{i1},\ldots,\sigma_{iN})\in\{0,1\}^{N}$ the \textit{linking strategy} of trader $i$, where $\sigma_{ij}=1$ indicates that $i$ want to establish a link to $j$ and $\sigma_{ij}=0$ indicates that she does not. We fix $\sigma_{ii}=0$ for all $i\in\mathbf{N}$. Let $\sigma=(\sigma_i,\sigma_{-i})$ be the strategy profile. Then, $G_{ij}=G_{ji}=\min\{\sigma_{ij},\sigma_{ji}\}$. In this way we define a function from every strategy profile $\sigma$ to an undirected network $G(\sigma) \in \mathcal{G}$. We employ the notion of \textit{bilateral equilibrium} (BE) \citep{goyal2007structural} to distinguish stable strategy profiles according to this \emph{network formation} process. 


\begin{definition}\label{def:bilateral}
    A strategy profile $\sigma^\ast$ is a bilateral equilibrium if
    \begin{enumerate}
        \item for any $i\in\mathbf{N}$ and every $\sigma_i\in \Sigma_i$, $E(\pi_i(\sigma^\ast))\geq E(\pi_i(\sigma_i,\sigma^\ast_{-i}))$, and
        \item for any pair $i,j\in\mathbf{N}$ and every $(\sigma_i,\sigma_j)$, $E(\pi_i(\sigma_i,\sigma_j,\sigma^\ast_{-i-j}))>E(\pi_i(\sigma^\ast_i,\sigma^\ast_j,\sigma^\ast_{-i-j}))$ implies $E(\pi_j(\sigma_i,\sigma_j,\sigma^\ast_{-i-j}))<E(\pi_j(\sigma^\ast_i,\sigma^\ast_j,\sigma^\ast_{-i-j}))$.
    \end{enumerate}
\end{definition} 

A BE of the network formation game establishes an allocation rule that is robust to joint deviations from any pair of players. No agent wants to alter her linking strategy, and there exists no pair of players who can simultaneously increase their profits by choosing any bilateral deviation. This is all in anticipation of the expected profit from the trading game. \medskip

Links are communication channels through which strategic traders can reveal information to others, thereby manipulating the market price in a favorable way. 
Whom a trader reveals her information to is irrelevant. However, links are communication channels through which strategic traders may receive information from their neighbors. Traders want to have neighbors with high probability of being directly informed as this determines their probability of being indirectly informed. 


\begin{proposition}\label{prop:endogenous}
    A BE of the network formation game exists. Moreover, any BE $\sigma$ profile induces the same network $G(\sigma) \in \mathcal{G}$, which is unique, ordered and consists of cliques.
\end{proposition}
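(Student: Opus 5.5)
The proof proceeds in three steps: compute ex-ante payoffs as a function of the network, show that any BE network must be a union of cliques, and then build the unique ordered partition greedily.

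\textbf{Step 1: ex-ante payoffs.} I use Proposition \ref{prop_characterization} to write trader $m$'s expected payoff as a function of $G$. The payoff splits into three terms:
\begin{itemize}
\item a term $p_m\,\Pi^I(\ell_m)$ from being directly informed, which depends only on $m$'s own degree;
\item a term from being a neighbor of some directly informed $i$, namely $\sum_{i:G_{im}=1} p_i\,\Pi^O(\ell_i)$;
\item a residual term from being uninformed, which contributes $(\bar r/(N+1))$-type payoffs independent of the network.
\end{itemize}
The key observation is that, by Proposition \ref{prop_characterization}, the always-uninformed traders trade $z_k=\bar r/(N+1)$ regardless of who is informed. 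Hence $\Pi^I$ and $\Pi^O$ depend only on the informed trader's degree $\ell_i$ (through $a_i,b_i$ and $\phi_i$). They do not depend on the rest of the architecture.

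From this I would establish two monotonicity facts. First, $\Pi^O(\ell)-\Pi^U$ is positive: being a neighbor of the informer beats being uninformed. This should follow from the revealed-information gain plus Corollary \ref{cor_pos_trade}. Second, $\Pi^I(\ell)$ is decreasing in $\ell$, consistent with Example \ref{ex_N-1=S}, because more neighbors means more competitors when revealing.

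\textbf{Step 2: any BE network is ordered and made of cliques.} The gain to $m$ from a link $mj$ has the form
\[
p_j\,[\Pi^O(\ell_j+1)-\Pi^U]\;+\;p_m\,[\Pi^I(\ell_m+1)-\Pi^I(\ell_m)]\;+\;\text{(effects on other neighbors' degrees)}.
\]
This expression is increasing in $p_j$. So whenever $m$ accepts a link to $j$, $m$ also wants a link to any $j'$ with $p_{j'}>p_j$, provided degrees are held comparable. Combining this with bilateral stability yields orderedness. Transitivity then follows: if $m$ is linked to $j$ and $j$ to $k$, the degrees are equalized inside a component, which forces the component to be complete, i.e.\ a clique.

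\textbf{Step 3: existence and uniqueness via a greedy construction.} Start with trader $1$ and add traders $2,3,\dots$ in the probability ordering. Trader $s+1$ is added to the current clique $C$ exactly when every member of $C$, as well as $s+1$, gains from the merger. The marginal benefit to a member is the extra $p_{s+1}$-weighted outneighbor gain. The marginal loss is the drop of $\Pi^I$ and $\Pi^O$ as the clique size increases.

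Once trader $s+1$ is rejected, the clique closes and the procedure restarts with $s+1$ as the seed of a new clique. Later traders have even lower probability, so they would also be rejected by the closed clique. This produces a unique candidate partition.

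It remains to show that this candidate is a BE and that no other network is. For the first, I check unilateral deletions and bilateral additions, using the single-crossing in $p_j$ from Step 2. For the second, I show that any BE network must coincide with the greedy one by induction on the clique containing the highest-probability remaining trader.

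\textbf{Main obstacle.} The hardest step is the sign and monotonicity of $\Pi^I(\ell)$ and $\Pi^O(\ell)$ in $\ell$. These quantities depend on the endogenous $\phi_i$ and $\mu_i$, which are defined only through the fixed point in Proposition \ref{prop_characterization}. I would first try an envelope argument: at the revelation thresholds the informed trader is indifferent, so the derivative of her value with respect to $\ell$ reduces to the direct competition effect, and that effect is signable.

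If that fails, I would reduce to the pairwise comparison between a clique of size $c$ and one of size $c+1$. This comparison is all that the greedy step needs. Ties in the probabilities, treated in the Appendix, would be handled by fixing an arbitrary tie-breaking order.
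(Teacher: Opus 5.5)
Your Step~1 decomposition is correct. The first monotonicity fact is easier than you suggest. Whether or not $i$ reveals, the out-neighbor's trade has mean $\bar r/(N+1)$, and her expected profit is $1/K$ times its second moment, so $\Pi^O(\ell)>\Pi^U$ is just Jensen's inequality.

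The genuine gap is the passage to cliques. In Step~2 you assert that ``the degrees are equalized inside a component,'' but nothing in your marginal-gain formula produces this, and orderedness does not either. The path with links $12$ and $23$, or the network with links $12,13,23,34$, is ordered in the paper's sense but is not a union of cliques. Given orderedness, equal degrees within a component would indeed force a clique, because the closed neighborhood of the component's top trader is a clique. So the whole step rests on this unproved equalization. Your single-crossing argument varies only the partner's type $p_j$, and only at a fixed partner degree. It says nothing about how the willingness to add a link depends on one's \emph{own} type $p_m$, which enters through $p_m[\Pi^I(\ell_m+1)-\Pi^I(\ell_m)]$. For the same reason, orderedness is not yet established either. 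The paper's device there is a swap (drop $k$, add $j$), which leaves the deviator's degree, and hence her $\Pi^I$ term, unchanged. A comparison of link additions does not do this.

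Step~3 does not close the gap. Your rule ($s+1$ joins $C$ exactly when every member of $C$ and $s+1$ gain from the merger) is a coalitional test. A bilateral equilibrium must instead survive deviations by a single pair, and a single link between $m\in C$ and $s+1$ changes only $\ell_m$ and $\ell_{s+1}$. So rejecting the merger does not exclude that $m$ and $s+1$, now seeding a clique of size $c'$, both gain from that one link. Trader $m$'s gain from it is $p_{s+1}[\Pi^O(c')-\Pi^U]+p_m[\Pi^I(|C|)-\Pi^I(|C|-1)]$. This differs from her merger gain both in the partner's degree and in the terms coming from the other members' degrees. Moreover, under your own conjecture that $\Pi^I$ is decreasing, this per-link cost is proportional to $p_m$ while the benefit is the same for all members of $C$. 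The lowest-probability member of $C$ is therefore the one most willing to link outward, and your verification never checks that case.

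The paper organizes this step around trader~1. Since $\pi^d$ depends only on the degree, trader~1 picks an optimal $\ell_1^*$, and orderedness places her neighbors at $2,\dots,\ell_1^*+1$. The paper then argues that a member of this group linked to an outsider would have strictly more links than trader~1, although trader~1 is the more desirable partner, so that member does better imitating trader~1. That route does not use the monotonicity of $\Pi^I(\ell)$ that you single out as the main obstacle. Example~\ref{ex_N-1=S} concerns the revelation probability, not $\Pi^I$, so it gives no support for that claim anyway. What your proposal lacks is a pairwise argument bounding the degrees of a clique's members by the degree chosen by its top member.
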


Proposition \ref{prop:endogenous} establishes existence and uniqueness of the equilibrium network. Moreover, it provides a sharp characterization of the BE network. These results rest on the following observations. The probability for a player to indirectly learn the information increases in the probability that her neighborhood contains the directly informed trader. A directly informed trader is only concerned about how many traders she reveals the information to, but not their type. Higher types are consequently more desirable connections and link among themselves. The BE is ordered with respect to the \textit{ex-ante} probability to obtain information directly. The best type selects how many links she wants to establish and everyone would accept this link. \medskip

Trader $1$, who has the highest \textit{ex--ante} probability to be directly informed by Nature, can establish $\ell_1^\ast$ links. Any trader $2$ to $\ell_1^\ast+1$ accepts the links with $1$ and wants to establish exactly the same number of connections. This is because trader $1$ chooses the optimal number of out-neighbors subject to the distribution $f(r)$ and its effect on the revelation interval $(a_1,b_1)$. 
A similar logic applies to the remaining traders and the BE network consists of cliques. \medskip

Uniqueness of the BE network follows directly from the strict heterogeneity in traders' \textit{ex--ante} probabilities of being directly informed, which determine the exact composition of each clique. Since we chose an arbitrary (strict) probability ordering, this equilibrium must always exist. In principle, our results remain qualitatively similar in the general case where identical types are allowed. However, multiple BE networks may exist when identical traders are in different cliques. Swapping the position of those traders trivially yields an equilibrium, however, the different equilibria are economically identical. One example is the case where the worst clique includes players who cannot be directly informed at all but not all of these players. Exchanging traders with zero probability to be directly informed trivially yields equilibrium multiplicity. However, the different equilibria are strategically equivalent with identical clique sizes across equilibria. We provide more details on this in the Appendix. 

\section{Conclusion}\label{sec:conclusion}

This paper has explored the strategic information revelation within a trading model where traders operate under conditions of uncertainty about the ex-post value of a good. We considered a scenario where communication between traders is modeled as a network, and the true value of the good is known initially by only one trader, who then faces a critical decision: whether to share this information with her neighbors or to keep it concealed.\medskip

Our analysis demonstrates that in the unique equilibrium, strategic traders reveal only certain pieces of private information—specifically, those signals that lie within a critical interval centered around the mean of all possible signals. This behavior reflects a strategic consideration: by revealing information, a trader can subtly influence the beliefs and trades of their neighbors to their advantage, ultimately reducing their neighbors' demand and increasing their own profits. Good news is always concealed because it would stimulate demand and increase the price for the directly informed trader. Sufficiently bad news, which would lead to a greater drop in neighbors' trades and thus a substantial drop in the price, are also kept hidden to avoid disproportionately adverse effects on the profit. \medskip

The model also underscores the dual role of concealing information. Anticipating that slightly bad news are revealed, traders correct their belief about the good's value upward when information channels remain silent. 
This illustrates the complexity of strategic interactions in information-rich environments and the nuanced trade-offs that traders must navigate.\medskip

The selective revelation of information implies a positive probability for prices to be non-revealing the asset's true return. 
Since our model differs from the seminal framework of \cite{kyle1985continuous} by allowing for direct communication among traders and by assuming that the demand of the non-strategic sector is common knowledge, prices would be fully informative in the absence of communication.
Notably, the selective revelation of information remains robust to considering the possibility of communication channels forming strategically prior to Nature's choice of the directly informed trader. 

Our findings extend classical insights from oligopoly theory and revisit a classical question in the equilibrium theory of financial markets. While information is never revealed in models where firms compete in quantities, traders reveal certain signals despite competition in quantities (trades). This is because information revelation creates positive externalities for buyers and negative externalities for sellers.

	\appendix
	\global\long\def\thesection{Appendix \Alph{section}}
	\global\long\def\thesubsection{\Alph{section}.\arabic{subsection}}

\section{Proofs}

\begin{proof}[Proof of Lemma \ref {lemma_ab}]
When any agent $i$ is informed by Nature and she reveals this information to her out-neighbors, the aggregate quantity traded by all the (strategic) traders is:
\begin{eqnarray}
X & = & (1+\ell_i) x_i (r) + Z_i \nonumber \\
 & = & \frac{1 + \ell_i}{2 + \ell_i} r + \frac{1 }{2 + \ell_i} Z_i \ \ ,  \nonumber
\end{eqnarray}
and the common payoff for both the informed trader and those who have been informed indirectly is: 
\begin{eqnarray}
\pi_i & = & x_i (r) \frac{1}{K} \left( K V - X - H + Q \right)  \nonumber \\
 & = & \frac{1}{K} \frac{r-Z_i}{2 + \ell_i} \left( \frac{r}{2+\ell_i} -    \frac{Z_i}{2+\ell_i} \right) \nonumber \\
 & = & \frac{1}{K} \left( \frac{r-Z_i}{2 + \ell_i} \right)^2.
\label{eq:payoff_reveal}
\end{eqnarray}

\bigskip

Instead, if agent $i$ does not reveal her information, the aggregate quantity traded by all traders becomes:
\begin{eqnarray}
X & = & {\bar{x}}_i (r) + Z_i + \sum_{j \in \hat{N}_i } \bar{y}_j(i) \nonumber \\
 & = & \frac{1}{2} \left( r +  Z_i + \sum_{j \in \hat{N}_i } \bar{y}_j(i) \right)  \ \ ,  \nonumber
\end{eqnarray}
and the payoff for the only informed trader is 
\begin{eqnarray}
\pi_i & = & \bar{x}_i (r) \frac{1}{K} \left( K V - X - H + Q \right)  \nonumber \\
 & = & \frac{1}{2} \left(  r- Z_i - \sum_{j \in \hat{N}_i } \bar{y}_j(i) \right) \frac{1}{K} \left( r -   X  \right) \nonumber \\
 & = & \frac{1}{4K}\left(  r- Z_i - \sum_{j \in \hat{N}_i } \bar{y}_j(i) \right)^2 \ \ .
\label{eq:payoff_hide}
\end{eqnarray}

The payoff from equation (\ref{eq:payoff_reveal}) is greater or equal than the payoff from equation (\ref{eq:payoff_hide}) , when
\begin{equation}
\frac{1}{K}  \left(\frac{r-Z_i}{\left( 2 + \ell_i \right) }\right)^2 \geq \frac{1}{4K}\left(  r- Z_i - \sum_{j \in \hat{N}_i } \bar{y}_j(i) \right)^2 \ \ ,
\label{ineq:payoffs}
\end{equation}
or equivalently, when $r$ satisfies an inequality of the following form:
\begin{align*}
\underbrace{\left( \frac{1}{ \left( 2 + \ell_i \right)^2 } - \frac{1}{4} \right)}_{A} r^2 + \underbrace{\left(\frac{\sum_{j\in\hat{N}_i}\bar{y}_j(i)}{2}+\frac{Z_i}{2}-\frac{2Z_i}{(2+\ell_i)^2}\right)}_B r 
\\
- \underbrace{\left(\frac{\left(\sum_{j\in\hat{N}_i}\bar{y}_j(i)\right)^2}{4}+\frac{\left(\sum_{j\in\hat{N}_i}\bar{y}_j(i)\right)Z_i}{2}+\frac{Z_i^2}{4}-\frac{Z_i^2}{(2+\ell_i)^2}\right)}_C \geq 0.
\end{align*}
The left hand part is always a downward parabola (as $\ell_i\geq1$ is required for the revelation interval to be non-empty). Hence, if we find two values of $r_1$ and $r_2$ such that that $Ar_1^2+Br_1+C=0$ and $Ar_2^2+Br_2+C=0$, i.e., satisfy the expression with \textbf{equality}, then any $r$ that lies in--between $r_1$ and $r_2$ 
is revealed.
It is easy to check that both
\[
r_1 = \frac{2+\ell_i}{4+\ell_i} \left( \sum_{j \in \hat{N}_i } \bar{y}_j(i) \right) + Z_i
\]
and 
\[
r_2 = \frac{2+\ell_i}{\ell_i} \left( \sum_{j \in \hat{N}_i } \bar{y}_j(i) \right) + Z_i
\]
satisfy condition (\ref{ineq:payoffs}) with equality. Let us define $a_i=\min\{r_1,r_2\}$ and $b_i=\max\{r_1,r_2\}$. Lemma \ref{lemma_ab} follows.
\end{proof}

\bigskip


\begin{proof}[Proof of Lemma \ref{prop_existence}]
    As a first step, we establish that any $s$ that is a PBE of the game must be a separating equilibrium, where some signals are revealed (all $r\in(a_i,b_i)$) and all other signals are concealed. Lemma \ref{lemma_ab} immediately rules out an equilibrium where all signals are concealed. 

    { So, fix any $Z_i = \sum_{k\not\in \hat{N}_i\cup\{i\}}z_k \in \mathbb{R}$  and} suppose \textit{ad absurdum} that $g_i(r)=1$ for all $r\in \mathbb{R}$. 
    Any $k\not\in \hat{N}_i\cup\{i\}$ trades some quantity $z_k$, which directly informed trader $i$ can perfectly infer. The aggregate trade of isolated nodes is insensitive to the revelation decision of directly informed trader $i$, since traders who are not out-neighbors of $i$ do not observe $i$'s revelation decision by assumption. Let the posterior belief of an out-neighbor of $i$ off equilibrium, i.e., when trader $i$ chooses $g_i(r)=0$, be denoted by $\tilde{r}_j(i)$. The trading action of $j$ is thus given from equation \eqref{eq:opt_x} as:
    \begin{eqnarray}\label{eq:ybar}
        \bar{y}_j(i)=\frac{1}{2}\left(\tilde{r}_j(i)-\mathbb{E}_j(\bar{x}_i(r|r=\tilde{r}_j(i)))-\sum_{j'\in \hat{N}_i\setminus\{j\}}\mathbb{E}_j(\bar{y}_{j'}(\tilde{r}_{j'}(i)))-\sum_{k\not\in \hat{N}_i\cup\{i\}}z_k\right)
    \end{eqnarray}

    The other relevant posterior belief off equilibrium for the out-neighbor $i$ whom we call $j$ 
    is $\mathbb{E}_j(\bar{y}_{j'}(\tilde{r}_{j'}(i)))$, i.e., the beliefs about the trading action of each other out-neighbor of directly informed trader $i$, say $j'$ given the belief of $j$ about $\tilde{r}_{j'}$.
    
    Equation \eqref{eq:opt_x} uniquely pins down $\mathbb{E}_j(\bar{x}_i(r|r=\tilde{r}_j(i)))$ and $\mathbb{E}_j(\bar{y}_{j'}(\tilde{r}_{j'}(i)))$ for each $j,j'\in \hat{N}_i$. The quantity $\sum_{j\in \hat{N}_i}\bar{y}_j(i|r=\tilde{r}_j(i))$ is thus well defined for any off equilibrium beliefs the out-neighbors of $i$ may hold. Fix the off-path trades $\{\bar y_j(\tilde r_j(i))\}_{j\in\hat N_i}$ that $i$'s out-neighbors play upon observing concealment. Given any such fixed profile, Lemma~\ref{lemma_ab} implies that $i$'s best response is to reveal if and only if $r$ lies in the bounded interval $(a_i,b_i)$ defined in (9)--(10). Since $f$ has full support on $\mathbb R$, the event $\{r\notin(a_i,b_i)\}$ has strictly positive probability, and on it concealment strictly dominates revelation. Hence $g_i(r)=1$ for all $r$ cannot be optimal, contradicting the supposed pooling-on-revelation equilibrium. Any PBE is therefore separating. This occurs with positive probability since $f$ has full support on the real line. Hence, any PBE of the game is indeed a separating equilibrium. 

    We now show that the trading action of each $j\in \hat{N}_i$ (who may be informed indirectly) is unique in the PBE. Note, each $j\in \hat{N}_i$ can infer the trading action of $k\not\in \hat{N}_i\cup\{i\}$, so let us use the notation $Z_i\equiv\sum_{k\not\in \hat{N}_i\cup\{i\}}z_k$.

    The best response of $i$'s uninformed out-neighbor $j$ (i.e., when $g_i(r)=0$) is given by equation \eqref{eq:opt_x}: 

    \begin{eqnarray*}
        \bar{y}_j(i)=\frac{1}{2}\left(\underbrace{\frac{\bar{r}-\mu_i}{1-\phi_i}}_{\mathbb{E}_j(r|r\not\in(a_i,b_i)}-\mathbb{E}_j(\bar{x}_i(r))-(\ell_i-1)\bar{y}_j(i)-Z_i\right) ,
    \end{eqnarray*}
   { where $\phi_i \equiv \int_{a_i}^{b_i} dF(r) = F(b_i) - F(a_i)$ and $\mu_i \equiv \int_{a_i}^{b_i} r dF(r)$.}

The Bayesian posterior of $j$ about the trading action of informed trader $i$ follows from equation \eqref{eq:opt_x_hide}:
\[
\mathbb{E}_j (\bar{x}_i) = \frac{\bar{r} - \mu_i -  (1-\phi_i) \left( \ell_i \bar{y}_j(i) + Z_i \right)}{2 (1-\phi_i)}.
\]
The best response of uninformed out-neighbor $j$ is then
\begin{eqnarray}
\bar{y}_j(i) = \mathbb{E}_j (\bar{x}_i(r)) & = & \frac{1}{2(1-\phi_i)} \big( \bar{r} - \mu_i -  (1-\phi_i) \left( \ell_i \bar{y}_j + Z_i    \right) \big)  \nonumber \\
& = & \frac{ \bar{r} - \mu_i}{(\ell_i+2) (1-\phi_i)} - \frac{Z_i}{\ell_i+2}.
\label{eq:zs__zi}
\end{eqnarray}

By virtue of Lemma \ref{lemma_ab}, we arrive at the following equation:
%


\begin{eqnarray*}
   \underbrace{(2+\ell_i) \bar{y}_j(i)}_{LHS} &=&\mathbb{E}_j\left[\hspace{2 pt} \tilde{r}\  \vert  \ r \notin \bigg(\frac{2+\ell_i}{4+\ell_i}\hspace{2 pt}\ell_i\bar{y}_j(i)  +  Z_i\hspace{3 pt},\hspace{3 pt}(2+\ell_i)\bar{y}_j(i)+Z_i\bigg)\hspace{2 pt} \right] 
     -  Z_i 
 \nonumber \\
   \underbrace{(2+\ell_i) \bar{y}_j(i)+Z_i}_{LHS}  &=& \underbrace{\mathbb{E}_j\left[\hspace{2 pt} \tilde{r}\  \vert  \ r \notin \bigg(\frac{2+\ell_i}{4+\ell_i}\hspace{2 pt}\ell_i\bar{y}_j(i)+Z_i  \hspace{3 pt},\hspace{3 pt}(2+\ell_i)\bar{y}_j(i)+Z_i \bigg)\hspace{2 pt} \right]}_{RHS}
 \nonumber 
\end{eqnarray*}


%
which provides an implicit way of determining the value of $\bar{y}_j(i)$, given $Z_i$. Under the assumption that $\bar{y}_j(i)$ is non-negative we can argue that such a value exists and is well-defined using the following constructive argument.

Define $g(b)\equiv b-\Psi(b)$ on $(Z_i,\infty)$, where the interval $(a_i,b)$ is proper, with
$$
\Psi(b)=\dfrac{\int_{-\infty}^{a_i(b)} r\,dF(r)+\int_{b}^{\infty} r\,dF(r)}{F(a_i(b))+1-F(b)}$$
and
$a_i(b)=\alpha b+\tfrac{4Z_i}{\ell_i+4}$, $\alpha=\tfrac{\ell_i}{\ell_i+4}\in(0,1)$.

\emph{Existence.} As $b\downarrow Z_i$ the interval degenerates and $\Psi(b)\to\bar r$, so
$g(Z_i^+)=Z_i-\bar r<0$ (because {  we impose} $Z_i<\bar r$). As $b\to\infty$, $a_i(b)\to\infty$ with
$F(a_i(b))\to1$ and $1-F(b)\to0$, so $\Psi(b)\to\bar r$ and $g(b)\to\infty$. Since $g$ is
continuous, it has a zero $b_i^\ast\in(Z_i,\infty)$ by the intermediate value theorem.

\emph{Uniqueness.} Write $\Psi=M/D$ with $M(b)=\int_{-\infty}^{a_i}rdF+\int_b^\infty rdF$ and
$D(b)=F(a_i)+1-F(b)>0$. Then $M'(b)=\alpha\,a_i f(a_i)-b f(b)$ and
$D'(b)=\alpha f(a_i)-f(b)$. At any zero $b^\ast$ of $g$ we have $\Psi(b^\ast)=b^\ast$, so
\[
\Psi'(b^\ast)=\frac{M'-b^\ast D'}{D}=\frac{\alpha\,f(a_i^\ast)\,(a_i^\ast-b^\ast)}{D(b^\ast)}\le 0,
\]
since $a_i^\ast<b^\ast$. Hence $g'(b^\ast)=1-\Psi'(b^\ast)\ge 1>0$: $g$ is strictly increasing at
every zero, which rules out more than one zero. Therefore the fixed point $b_i^\ast$, and with it
$\bar y_j(i)=\tfrac{1}{2+\ell_i}(b_i^\ast-Z_i)$, is unique.

Intuitively, take \(b^\ast\) to be any fixed point of \(\Psi\), so that \(b^\ast=\Psi(b^\ast)\). A marginal rightward shift of the excluded interval removes probability mass whose average value is greater than the current truncated mean. Therefore, the truncated-conditional mean cannot increase locally, implying \(\Psi'(b^\ast)\leq 0\). Thus, at every fixed point, \(\Psi\) crosses the \(45^\circ\) line from above. This single-crossing property is sufficient for uniqueness. Moreover, the argument uses only the continuity of \(F\) and the existence of a density.



To conclude the proof, we verify that $\phi_i>0$, which implies a strictly positive
probability of revelation when $i$ has out-neighbors. Since $f$ has full support, $\phi_i>0$
holds whenever $b_i>a_i$. The proof proceeds by contradiction.

Suppose \textit{ad absurdum} that $a_i=b_i$ and $\hat N_i\neq\emptyset$. By
Lemma~\ref{lemma_ab} this holds whenever
\[
\frac{2+\ell_i}{4+\ell_i}\ell_i\,\bar y_j(i)+Z_i=(2+\ell_i)\bar y_j(i)+Z_i,
\]
and since in any PBE $\bar y_j(i)$ is unique given $Z_i$, as established previously,
$a_i=b_i$ implies $\bar y_j(i)=0$.

When $\phi_i=0$ revelation never occurs, so every out-neighbor $j\in\hat N_i$ keeps the prior
$\bar r$. Evaluating the out-neighbor best response \eqref{eq:ybar} at $\phi_i=\mu_i=0$ (the
limit as the revelation interval collapses) gives
\[
\bar y_j(i)=\frac{\bar r-\mu_i}{(\ell_i+2)(1-\phi_i)}-\frac{Z_i}{\ell_i+2}\bigg|_{\phi_i=\mu_i=0}
=\frac{\bar r-Z_i}{\ell_i+2}>0,
\]
since {  $Z_i
<\bar r$}. This contradicts $\bar y_j(i)=0$. Hence
$a_i\neq b_i$ and, by full support of $f$, $\phi_i=F(b_i)-F(a_i)>0$. \end{proof}







\bigskip

\begin{proof}[Proof of Proposition \ref{prop_characterization}]
{  Lemmas \ref{lemma_ab} and \ref{prop_existence} establish that any PBE has a separating structure, whereby the directly informed trader $i$ reveals any $r\in(a_i,b_i)$ and conceals any $r\notin(a_i,b_i)$, provided that $Z_i<\bar r$. We now close the equilibrium loop by showing that, for any candidate value of $\bar{y}_j(i)$, the aggregate trade $Z_i$ of agents who neither observe $r$ nor receive it through revelation is uniquely determined and independent of $\bar{y}_j(i)$. In particular, we show that in any PBE
$
Z_i=(N-\ell_i-1)\frac{\bar r}{N+1}<\bar r.
$
Therefore, the condition required in Lemma \ref{prop_existence} is always satisfied in equilibrium. This allows us to characterize the unique PBE.
}

    The trades in { any} PBE depend on $i$'s revelation decision, $g_i(r)$, which in turn depends on the expected trades of everyone else. Note, directly informed trader $i$ perfectly anticipates the trading action of $k\not\in \hat{N}_i\cup\{i\}$. 
    Trader $j\in\hat{N}_i$ trades precisely the same quantity as $i$ when $g_i(r)=1$, denoted by $x_i(r)$, because $i$ and $j$ hold the same information in this case. When $g_i(r)=0$, the equilibrium condition for $j$'s trading action is thus given from equation \eqref{eq:opt_x} as:

    \begin{eqnarray*}
        \bar y_j(i)=\frac12\left(\tilde r_j(i)-\bar x_i\big(\tilde r_j(i)\big)
-\sum_{j'\in\hat N_i\setminus\{j\}}\bar y_{j'}(i)-Z_i\right),\\ \qquad \text{with}
\qquad
\tilde r_j(i)\equiv \mathbb E_j\!\big(r\mid g_i(r)=0\big),\quad
Z_i\equiv\!\!\sum_{k\notin\hat N_i\cup\{i\}}\!\! z_k .
    \end{eqnarray*}

    The term $\tilde r_j(i)=\mathbb E_j(r\mid g_i(r)=0)$ captures trader $j$'s posterior belief about the realization of $r$ when concealment is observed. From equation \eqref{eq:opt_x}, we know that $\bar y_j(i)$ also depends on (i) the directly informed trader's trade evaluated at this posterior belief, $\bar x_i\big(\tilde r_j(i)\big)$; (ii) the total trade of all other out-neighbors of $i$, $\sum_{j'\in\hat N_i\setminus\{j\}}\bar y_{j'}(i)$; and (iii) the aggregate trade of traders outside $\hat N_i\cup\{i\}$, summarized by $Z_i=\sum_{k\notin\hat N_i\cup\{i\}} z_k$. Since trader $j$ can infer each $z_k$ for $k\notin\hat N_i\cup\{i\}$, the latter term enters directly rather than through an expectations operator.
    

    Since all $j,j'\in \hat{N}_i$ hold the same information, their trades in the PBE of the game must also be equal. Formally, it must hold in the PBE that $\bar{y}_j(i)=\bar{y}_{j'}(i)$. Using this fact, we can rewrite the condition as


    \[
    \bar{y}_j(i)=\frac{1}{2}\left(\mathbb{E}_j(r|r\not\in(a_i,b_i))-\bar{x}_i(r|r=\mathbb{E}_j(r|r\not\in(a_i,b_i)))-\sum_{j'\in \hat{N}_i\setminus\{j\}}\bar{y}_{j'}(i)-Z_i\right),
    \]

    where we use $Z_i\equiv \sum_{k\not\in\hat{N}_i\cup\{i\}}z_k$. Simplifying yields

    \[
    \bar{y}_j(i)=\frac{1}{\ell_i+1}\left(\mathbb{E}_j(r|r\not\in(a_i,b_i))-\bar{x}_i(r|r=\mathbb{E}_j(r|r\not\in(a_i,b_i)))-Z_i\right).
    \]

   Observing the network structure, trader $j$ can infer the revelation interval $(a_i,b_i)$, which also pins down the values $\phi_i$ and $\mu_i$. The posterior belief of any $j\in \hat{N}_i$ about $r$ is then
    
    \[
    \mathbb{E}_j (r  | r\not\in(a_i,b_i)) = \frac{\bar{r}- \mu_i}{1- \phi_i}.
    \]
    
    Conditional on $r\not\in (a_i,b_i)$, the posterior beliefs of $i$'s representative out-neighbor $j$ about $\bar{x}_i(r)$ is computed applying equation \eqref{eq:opt_x_hide}:

\[
\mathbb{E}_j (\bar{x}_i(r) | r\not\in(a_i,b_i)) = \frac{\bar{r} - \mu_i -  (1-\phi_i) \left( \ell_i \bar{y}_j(i) + Z_i \right)}{2 (1-\phi_i)}.
\]

The best response of uninformed out-neighbor $j$ can thus be rewritten as 
\begin{eqnarray}
\bar{y}_j(i) = \mathbb{E}_j (\bar{x}_i(r) | r\not\in(a_i,b_i)) & = & \frac{1}{2(1-\phi_i)} \big( \bar{r} - \mu_i -  (1-\phi_i) \left( \ell_i \bar{y}_j(i) + Z_i    \right) \big)  \nonumber \\
& = & \frac{ \bar{r} - \mu_i}{(\ell_i+2) (1-\phi_i)} - \frac{Z_i}{\ell_i+2}.
\label{eq:zs__zi}
\end{eqnarray}

This uses the fact that $\bar{y}_j(i)=\bar{y}_{j'}(i)$ for $j$ and $j'$ who are $i$'s out-neighbors, $j,j'\in\hat{N}_i$. Moreover, the expression uses the expected trading action of $i$ who chooses $g_i(r)=0$, as well as some aggregate trading action of isolated traders $Z_i$, which player $j$ is perfectly able to infer.

Note that any $k\not\in \hat{N}_i\cup\{i\}$, i.e., any $k$ who is neither an out-neighbor of directly informed trader $i$ nor the directly informed trader $i$, receives no additional information in the information revelation stage of the game. 
Using Bayes rule, the posterior belief of each trader $k$ who cannot be indirectly informed 
about $r$ is $\mathbb{E}_k(r) = \bar{r}$. The best response of trader $k$ can be expressed as:

\begin{equation}
z_k = \sum_{i\not\in \check{N}_k}p_i\left(\frac{1}{2} \left(  \bar{r} - \mathbb{E}_k (x_i(r)) - \ell_i \cdot \mathbb{E}_k (y_j(i,r)) - Z_{-k}\right)
\right)+\left(1-\sum_{i\not\in N}p_i\right)\frac{\bar{r}}{N+1}
\label{eq:zi__zs}
\end{equation}
where $Z_{-k}=Z_i-z_k$. The first term of the expression is the best response to the case where anyone whom $k$ cannot receive information from has been informed by Nature, weighted by the respective ex-ante probabilities with which these events occur. 
The second term computes the best response trade for $k$ in case no one has been informed, weighted by the appropriate ex-ante probability of this event. Note, trader $k$ assigns zero probability to the event where some $k'\in \check{N}_k$ holds information, i.e., someone who can pass on information to $k$. Indeed, this is consistent with the information available to $k$.
It is best response for player $k$ to play $z_k=\frac{\bar{r}}{N+1}$ if she believes Nature did not inform any trader. Moreover, for each $i$ who is possibly informed, $z_k=\frac{\bar{r}}{N+1}$ is also best response to the Bayesian posteriors $E_k(x_i(r))=\frac{\bar{r}}{N+1}$, $E_k(y_j(r,i))=\frac{\bar{r}}{N+1}$, $E_k(\bar{y}_j(i))=\frac{\bar{r}}{N+1}$ and $\mathbb{E}_k(z_{k'}|k'\not\in \hat{N}_i\cup\{i\})=\frac{\bar{r}}{N+1}$. Hence, $z_k=\frac{\bar{r}}{N+1}$ is the unique best response of any $k\not\in\hat{N}_i\cup\{i\}$. 
{  Therefore, among all the subgames considered in the statement of Lemma \ref{prop_existence}, only one can arise in equilibrium: the one in which
$
z_k=\frac{\bar r}{N+1}
$
for every trader $k$ who is neither informed nor can become informed through revelation. Consequently,
\[
Z_i=(N-\ell_i-1)\frac{\bar r}{N+1}<\bar r,
\]
so the condition required in Lemma \ref{prop_existence} is satisfied. This uniquely pins down the revelation interval and, therefore, the unique PBE of the whole game.
}

\medskip

{ To proceed with the characterization of this unique PBE}, we can use $Z_i=(N-\ell_i-1)\frac{\bar{r}}{N+1}$ when computing the best response $\bar{y}_j(i)$. For the case where $g_i(r)=0$, using $z_k=\frac{\bar{r}}{N+1}$, we obtain:

\begin{eqnarray*}
    \bar{y}_j(i) = \mathbb{E}_j (\bar{x}_i(r) | r\not\in(a_i,b_i)) & = & \frac{1}{2(1-\phi_i)} \big( \bar{r} - \mu_i -  (1-\phi_i) \left( \ell_i \bar{y}_j(i) + (N-\ell_i-1)z_k   \right) \big)  \nonumber \\
& = & \frac{ \bar{r} - \mu_i}{(\ell_i+2) (1-\phi_i)} - \frac{N-\ell_i-1}{\ell_i+2}\underbrace{\frac{\bar{r}}{N+1}}_{z_k}
\end{eqnarray*}

This is equivalent to the expression for $\bar{y}_j(i)$ in the statement of Proposition \ref{prop_characterization}. We can then write the equilibrium trade of directly informed trader $i$ as:

\begin{eqnarray*}
    \bar{x}_i(r)=\frac{1}{2}\left(r-\sum_{j\in \hat{N}_i}\bar{y}_j(i)-(N-\ell_i-1)\frac{\bar{r}}{N+1}\right)
\end{eqnarray*}

This is equivalent to the expression for $\bar{x}_i(r)$ in the statement of Proposition \ref{prop_characterization}. This concludes the proof of statement (a).

Having established the equilibrium trading action of $k\not\in \hat{N}_i\cup \{i\}$ and $j\in\hat{N}_i$, we can proceed with the trading action $x_i(r)$ when $i$ reveals her information. Given $g_i(r)=1$, trader $i$ and trader $j$ must choose the same trading action as they hold the same information. Hence:

\[
x_i(r)=y_j(i,r)=\frac{1}{2}\left(r-\ell_ix_i(r)-(N-\ell_i-1)\underbrace{\frac{\bar{r}}{N+1}}_{z_k}\right)
\]

Rewriting this expression yields:

\[
x_i(r)=y_j(i,r)=\frac{1}{\ell_i+2}\left(r-(N-\ell_i-1)\underbrace{\frac{\bar{r}}{N+1}}_{z_k}\right)
\]

This is indeed equivalent to the expression for $x_i(r)=y_j(i,r)$ in the statement of Proposition \ref{prop_characterization}. This concludes the proof of statement (b) because $g_i(r)=1$ iff $r\in(a_i,b_i)$ follows directly from Lemma \ref{lemma_ab}.

Having established all possible equilibrium trades, we can use the expressions from (a) and (b) to find the explicit values for $a_i$ and $b_i$. Since $\bar{r}\geq0$ by assumption, we can use the expression for $z_k$ to write:

\begin{eqnarray}\label{equ_trade_noninfomed}
    \bar{y}_j(i)= \frac{E_j(r|r\not\in(a_i,b_i))}{\ell_i+2}-\frac{N-\ell_i-1}{N+1}\frac{\bar{r}}{\ell_i+2}
\end{eqnarray}

Since $E_j(r|r\not\in(a_i,b_i))\geq\bar{r}$, we have that $\bar{y}_j(i)\geq0$ and thus:

\begin{eqnarray*}
    a_i & = & \frac{2+\ell_i}{4+\ell_i}\ell_i\bar{y}_i(i)+(N-\ell_i-1)\underbrace{\frac{\bar{r}}{N+1}}_{z_k}=\frac{4 \frac{ N-\ell_i-1}{N+1} \bar{r} + \ell_i \frac{\bar{r}-\mu_i}{1-\phi_i}}{\ell_i+4}\\
    b_i & = & \frac{2+\ell_i}{\ell_i}\ell_i\bar{y}_i(i)+(N-\ell_i-1)\underbrace{\frac{\bar{r}}{N+1}}_{z_k}=\frac{\bar{r}-\mu_i}{1-\phi_i}
\end{eqnarray*}
Indeed, $a_i<b_i$, which is consistent with Lemma \ref{lemma_ab}. This concludes the proof of Proposition \ref{prop_characterization}.
\end{proof}

\bigskip

\begin{proof}[Proof of Corollary \ref{cor_expect}]
    Suppose \textit{ad absurdum} that $a_i>b_i$. This would imply $\frac{\mu_i}{\phi_i}<a_i$. Proposition \ref{prop_characterization} establishes that $b_i=\frac{\bar{r}-\mu_i}{1-\phi_i}$ and $a_i>\bar{r}$ follows because, by assumption, $a_i>b_i$ (out-neighbor's posterior in case of no revelation) and $a_i>\frac{\mu_i}{\phi_i}$ (out-neighbor's posterior in case of revelation). 
 
    Now let us look at the expression of $a_i$. We have: 
    \[
    a_i=\frac{\ell_i}{\ell_i+4}b_i+\frac{4(N-\ell_i-1)}{(N+1)(\ell_i+4)}\bar{r}.
    \]
    
    Since the terms in front of $b_i$ and $\bar{r}$ are less than $1$ and $b_i<\bar{r}$ by assumption, in fact, $a_i<\bar{r}$. Indeed, if we suppose the extreme case where $b_i=\bar{r}$, we obtain $a_i=\frac{\ell_i}{\ell_i+4}\bar{r} + \bar{r}\frac{4}{\ell_i+4}\frac{N-\ell_i-1}{N+1}$. Since $\frac{N-\ell_i-1}{N+1}<1$ and $a_i<\bar{r}$, this is a contradiction because the unconditional mean of $r$ cannot be above both the conditional means in case of revelation and in case of no revelation. Corollary \ref{cor_expect} follows.
\end{proof}
\bigskip




\begin{proof}[Proof of Corollary \ref{cor_pos_trade}]
 {Take $j\in\hat{N}_i$ and $k\not\in\hat{N}_i\cup\{i\}$. By Proposition \ref{prop_characterization}, $z_k=E_k(\bar{x}_i|r=\bar{r})=\frac{\bar{r}}{N+1}>0$ and $\mathbb{E}_j(r|r\not\in(a_i,b_i))=b_i=\frac{\bar{r}-\mu_i}{1-\phi_i}>\bar{r}$. Since $\bar{y}_j(i)=E_j(\bar{x}_i|r=\frac{\bar{r}-\mu_i}{1-\phi_i})$, $\bar{y}_j(i)=E_j(\bar{x}_i|r=\frac{\bar{r}-\mu_i}{1-\phi_i})<z_k=E_k(\bar{x}_i|r=\bar{r})$ is inconsistent with Equation \ref{eq:opt_x} because:}

    \begin{align*}
        \bar{y}_j(i)&=&\frac{1}{2}\left(\underbrace{\frac{\bar{r}-\mu_i}{1-\phi_i}}_{>0}-\underbrace{(N-\ell_i-1)\frac{\bar{r}}{N+1}}_{>0}-(\ell_i)E_j(\bar{x}_i|r\not\in(a_i,b_i))\right)\\
        z_k & = & \frac12\left(\bar{r}-(N-\ell_i-2)\frac{\bar{r}}{N+1}-(\ell_i+1)\frac{\bar{r}}{N+1}\right)
    \end{align*}

    Here, we use the fact that $k$ expects everyone to have posterior belief $\bar{r}$. Rearranging yields:

    \begin{align*}
        (\ell_i+2)\bar{y}_j(i)&=&\underbrace{\frac{\bar{r}-\mu_i}{1-\phi_i}}_{>\bar{r}}-(N-\ell_i-1)\frac{\bar{r}}{N+1},\\
        (\ell_i+2)\frac{\bar{r}}{N+1} & = & \bar{r}-((N-\ell_i-1)\frac{\bar{r}}{N+1}.
    \end{align*}  
Hence $\bar y_j(i)>z_k>0$, which proves the claim.
\end{proof}

\bigskip

\begin{proof}[Proof of Proposition \ref{prop_price}]
First of all, when $H$ and $Q$ are fixed, $r$ is linear in $V$.
From equation (\ref{eq:price}) we have that 
\[
P = V + \frac{X-r}{K} \ \ .
\]
From equation \eqref{eq:opt_x_hide}, we have that outside the revelation interval
\begin{eqnarray}
P & = & V + \frac{1}{2K} \left( \ell_i \bar{y}_j(i)  + (N-\ell_i-1) z_k -  r \right) \nonumber \\
& = & 
\frac{1}{2} V + \frac{1}{K} \left( \ell_i \bar{y}_j(i)  + (N-\ell_i-1) z_k  +H - Q  \right)\ \ .
\label{eq:p_hide}
\end{eqnarray}

When instead $ r $ is in the revelation interval, the price resulting from equations (\ref{eq:price}) and  (\ref{eq:opt_x_reveal}), is such that
\begin{eqnarray}
P & = & V + \frac{N-\ell_i-1}{K} z_k - \frac{r}{K (\ell_i+2)} \nonumber \\
& = &  \frac{\ell_i+1}{\ell_i+2} V + \frac{N-\ell_i-1}{K} z_k + \frac{H-Q}{K (\ell_i+2)} \ \ .
\label{eq:p_reveal}
\end{eqnarray}

The two equations are linear in $V$, and  the steepness of $P$ with respect to $V$ is always higher in the interval of revelation than outside it, with this steepness increasing in the amount $\ell_i$ of possibly informed traders.

From Proposition \ref{prop_characterization} we also know that the revelation interval ends at 
\[
b_i= \frac{\bar{r}-\mu_i}{1-\phi_i} = \mathbb{E}_j (r  | r\not\in(a_i,b_i)) \ \ . 
\]
This means that at exactly $r=b_i$ out-neighbors do not receive new information and trade as they would without being informed, which also coincides with the trade of the directly informed trader. \\
So, for $r=b_i$ the two equations \eqref{eq:p_hide} and \eqref{eq:p_reveal} coincide.

\bigskip

Summing up, as $V$ (or equivalently $r$) increases, $P$ is linear with respect to $V$ for $r<a_i$, according to \eqref{eq:p_hide}.
At $r=a_i$ there is a discontinuity and the price, as $r$ increases, drops to the value predicted by the linearly increasing function \eqref{eq:p_reveal}. Then, the price $P$ is continuous and monotonically increasing in $r$. 
There is a value $r' <a_i$ such that $P(r') = \lim^+_{r \rightarrow a_i} P(r)$, and a value $r'' \in (a_i,b_i)$ such that $\lim^-_{r \rightarrow a_i} P(r) = P(r'')$.
\end{proof}

\bigskip

\begin{proof}[Proof of Proposition \ref{prop:endogenous}]

 To prove Proposition \ref{prop:endogenous} we start with a lemma establishing an important property of the bilateral equilibrium network.

    \begin{lemma}\label{lemma:ordered}
        Any network induced by a BE profile is ordered.
    \end{lemma}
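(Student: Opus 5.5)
To prove Lemma \ref{lemma:ordered}, I will argue by contradiction. The key tool is an exchange argument built on two facts.

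The first fact comes from Proposition \ref{prop_characterization}. When $i$ is directly informed, her payoffs depend on the network only through her out-degree $\ell_i$. The revelation interval $(a_i,b_i)$, the trades $\bar y_j(i)$, $x_i(r)$, and $z_k=\bar r/(N+1)$ are all functions of $\ell_i$ (and $N$) alone. They do not depend on who her neighbors are. The same holds for a neighbor $j$ of an informed $i$: her conditional payoff depends only on $\ell_i$. An isolated (uninformed, unreached) trader's payoff is the same regardless of which trader is informed.

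Hence $E(\pi_k)$ can be written as
\[
p_k\,\Pi^{I}(\ell_k)+\sum_{m\in N_k}p_m\,\Pi^{O}(\ell_m)+\Big(1-p_k-\sum_{m\in N_k}p_m\Big)\Pi^{U}.
\]
Here $\Pi^{I}(\ell)$ is the informed payoff, $\Pi^{O}(\ell)$ is the payoff of a neighbor of an informed trader with degree $\ell$, and $\Pi^{U}$ is the isolated payoff.

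The second fact I need is that $\Pi^{O}(\ell)>\Pi^{U}$ for every relevant $\ell$. Being a potential recipient of information is valuable. I would try to establish this by noting that the neighbor $j$ gets her best response to a finer information partition while $i$'s behavior is fixed. Being uninformed is the worst case, because the extra information only helps $j$ given others' equilibrium strategies. This step is delicate since others' strategies change too, but $z_k$ is constant and the comparison can be done by direct computation using Corollary \ref{cor_pos_trade}.

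Now suppose the BE network $G$ is not ordered. Then there are $p_i>p_j>p_k$ with $G_{ik}=1$ but $G_{jk}=0$ or $G_{ij}=0$.

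Take the case $G_{jk}=0$. Consider the bilateral deviation in which $k$ drops $i$ and links to $j$ instead, and in which $j$ drops one of her current neighbors $m$ with $p_m<p_i$ to keep her degree unchanged (or, if $j$ has no such neighbor, simply adds $k$). Then $k$'s degree is unchanged, so her informed payoff is unchanged, and she swaps a neighbor term of the form $p\,\Pi^O(\ell)$ for one with a different $p$.

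This is where the naive argument runs into trouble. Replacing $i$ by $j$ lowers $k$'s neighbor probability, which is the wrong direction for $k$. So the profitable pair deviation should instead be built around $j$ and $i$. Since $p_i>p_j>p_k$, the pair $(i,j)$ benefits if $j$ replaces her lowest-probability neighbor with $i$, and $i$ replaces $k$ with $j$. Each of them keeps the same degree and strictly raises the probability weight on the $\Pi^O$ term, because $p_j>p_k$ for $i$ and $p_i>$ the dropped neighbor's probability for $j$.

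Since only a pair is needed in condition 2 of Definition \ref{def:bilateral}, and dropping links is unilateral, these severances and the new mutual link $(i,j)$ form a valid bilateral deviation. This covers the case $G_{ij}=0$.

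For the case $G_{jk}=0$ with $G_{ij}=1$, I will use the pair $(j,k)$. Here $k$ drops her worst neighbor other than $i$, and $j$ drops a neighbor $m$ with $p_m<p_k$ if one exists. If no such neighbor exists, the degree comparison with $i$ yields a different contradiction via the optimal degree $\ell^*$.

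I expect the main obstacle to be exactly this bookkeeping. Degrees must be held fixed while strictly improving neighbor probabilities for both deviators. I also need the monotonicity $\Pi^{O}>\Pi^{U}$, which requires comparing the interval-dependent payoffs from Proposition \ref{prop_characterization}.
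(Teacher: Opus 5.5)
You have the right ingredients, and for $G_{ij}=0$ your pair deviation is the same exchange the paper uses ($i$ drops $k$ and links to $j$). Your decomposition $E(\pi_k)=p_k\Pi^{I}(\ell_k)+\sum_{m\in N_k}p_m\Pi^{O}(\ell_m)+(1-p_k-\sum_{m\in N_k}p_m)\Pi^{U}$ is correct.

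Your second fact has a short proof. Let $h$ be informed and $j\in N_h$. Then $j$'s trade is $t_j=(R-Z_h)/(\ell_h+2)$, with $R=r$ on $(a_h,b_h)$ and $R=b_h$ off it. Her expected payoff is $\mathbb{E}(t_j^2)/K$; under concealment this holds because $\bar y_j(h)$ is the conditional mean of $\bar x_h$. Moreover $\mathbb{E}(t_j)=\bar r/(N+1)$ and $\Pi^{U}=\bar r^2/(K(N+1)^2)$. Hence $w(\ell)\equiv\Pi^{O}(\ell)-\Pi^{U}=\mathrm{Var}(R)/(K(\ell+2)^2)>0$, where $R$ depends on $\ell$ through the revelation interval.

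This formula is exactly where your exchange breaks. The step ``each of them strictly raises the probability weight on the $\Pi^{O}$ term'' treats $\Pi^{O}$ as a constant. In fact $i$'s gain from replacing $k$ by $j$ is $p_j w(\ell_j)-p_k w(\ell_k)$, and $j$'s gain from replacing her dropped neighbor $m$ by $i$ is $p_i w(\ell_i)-p_m w(\ell_m)$. Because $w$ depends on the informed neighbor's degree (through $(\ell+2)^{-2}$ and through the interval), $p_j>p_k$ does not sign the first difference when $\ell_j>\ell_k$. A well-connected $j$ can be worth less to $i$ than a poorly connected $k$. The deviation also changes third parties' degrees, since $k$ and $m$ each lose a link. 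That feeds back into $j$'s payoff if $k\in N_j$, and into $i$'s if $m\in N_i$.

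There are three further holes.
\begin{itemize}
\item The degree-preserving swap need not exist. If $j$ has no neighbor, or all her neighbors are more likely to be informed than $i$, she must add a link instead. Then $p_j[\Pi^{I}(\ell_j+1)-\Pi^{I}(\ell_j)]$ enters, and nothing you have established controls its sign.
\item The case $G_{ij}=1$, $G_{jk}=0$ is not proved. $k$ may have no neighbor besides $i$, or only neighbors more likely to be informed than $j$. The appeal to ``a different contradiction via the optimal degree $\ell^*$'' is not an argument.
\item The paper will not close these holes for you. Its proof performs the same swap and simply asserts that a trader ``maximizes $\sum_{m\in N_i}p_m$'' and that the new link is ``also profitable'' for $j$, so it makes the same simplification.
\end{itemize}
What is missing is either a comparison of $p\,w(\ell)$ across different degrees, together with control of $\Pi^{I}(\ell+1)-\Pi^{I}(\ell)$ whenever a deviator's degree must change, or an extra assumption that removes the degree dependence of these terms.
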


    \begin{proof}
        Suppose \textit{ad absurdum} that for $i,j,k$ with $p_i> p_j> p_k\geq0$, $G_{ik}=1$ and $G_{ij}=0$. This implies that $\sigma^\ast_{ik}=1$ and either $\sigma^\ast_{ij}=0$ or $\sigma^\ast_{ji}=0$. We establish the argument for trader $i$. A similar argument holds for trader $j$. Denote by $\pi^d$, $\pi^o_m$ and $\pi^{iso}$ the expected profits of a player who is \textit{directly informed}, \textit{possibly indirectly informed} through trader $m$, or \textit{isolated}. Taking a trader's out-neighborhood as given, $\pi^d\geq\pi^o_m\geq\pi^{iso}$. 
        The expected \textit{ex--ante} payoff of generic trader $i$ is thus given by
        \begin{equation*}
            \mathbb{E}_i(\pi_i)=p_i\pi^d+\sum_{m\in N_i(G^\ast)}p_m\pi^o_m+\left(1-p_i-\sum_{m\in N_i(G^\ast)}p_m\right)\pi^{iso}
        \end{equation*}

        The first term is simply the \textit{ex--ante} payoff of being directly informed, weighted by the probability of this event. The second term is the \textit{ex--ante} payoff in case $i$ is possibly indirectly informed, weighted by the probability of this event. The third term is the \textit{ex--ante} payoff in case trader $i$ cannot be indirectly informed, weighted by the probability of this event.
        
        Consider first the case where $i$ is directly informed. The expression $\pi^d$ is independent of the identities of trader $i$'s out-neighbors. Hence, agent $i$ proposes connections with at least $\ell^\ast_i$ potential neighbors. 
        
        We can express the payoff of $i$ as $$\pi^d \;=\; \int_{a_i}^{b_i}\frac{1}{K}\left(\frac{r-Z_i}{\ell^*_i+2}\right)^{2}\,dF(r)
\;+\;\int_{r\notin(a_i,b_i)}\frac{1}{4K}\left(r-\ell^*_i\,\bar y_{s_i}(i)-Z_i\right)^{2}\,dF(r),$$


        Note that $a_i$, $b_i$, $\phi_i$, $\mu_i$, $Z_i$, and $\bar y_{s_i}(i)$ all depend on $i$'s out-neighbors only through their number $\ell_i^*$---never through their identities---since every excluded trader plays the same constant $z_k = \bar r/(N+1)$. Hence $\pi^d$ is a function of $\ell_i^*$ (and $f$) alone.
        
        Since $\pi^o_m\geq\pi^{iso}$, it is better to be linked to the directly informed trader than being isolated. Hence, player $i$ can establish weakly more links than $j$ and $k$, whose \textit{ex--ante} probability to receive the information is lower.
        
        Consider next the trade-off for an agent who is not directly informed. The second term of the \textit{ex-ante} payoff is the payoff from being a neighbor of the directly informed trader, weighted by the probability of a neighbor being directly informed.  The third term is the payoff of being isolated, weighted by the probability of being isolated. Together, these constitute the payoff of not being directly informed. The payoff of a neighbor of the directly informed trader exceeds the payoff of an isolated trader, i.e., $\pi^o_m\geq\pi^{iso}$. This is trivially true because an indirectly informed trader has more information compared to an isolated trader (even when the directly informed trader chooses to conceal her information). Player $i$ thus maximizes $\sum_{s_i\in N_i(G^\ast)}p_{s_i}$. Under the premise that $G^\ast_{ik}=1$ and $G^\ast_{ij}=0$, agents $i$ and $j$ do not obtain the highest possible outcomes, because she could delete the link to $k$ and establish a link to $j$ instead, which is also profitable for trader $j$. This contradiction proves the Lemma.
    \end{proof}
    
    Using Lemma \ref{lemma:ordered}, we can establish that the equilibrium network consists of cliques. We begin the argument by considering player $1$. Given $f(r)$, player $1$ establishes $\ell^\ast_1$ links. By Lemma \ref{lemma:ordered}, she forms connections to the best types, other than herself. 
    Hence, $G^\ast_{12}=...=G^\ast_{1\ell^\ast_1+1}=1$. We must then rule out that the trader indexed $\ell_1^\ast+1$ establishes additional links. Suppose \textit{ad absurdum} $G^\ast_{ij}=1$ for some $i\in \hat{N}_1(G^\ast)$ and $j\not\in \hat{N}_1(G^\ast)$. Lemma \ref{lemma:ordered} then implies $i$ has more neighbors than $1$, i.e., $\ell^\ast_1<\ell^\ast_i$. Trader $1$ and trader $i$'s expected draw in the revelation interval, subject to their out-degree, are $\frac{\mu_1}{\phi_1}<\bar{r}$ and $\frac{\mu_i}{\phi_i}<\bar{r}$. Their expected draw outside their revelation intervals is $b_1=\frac{\bar{r}-\mu_1}{1-\phi_1}$ and $b_i=\frac{\bar{r}-\mu_i}{1-\phi_i}$. Trader $1$ is the more desirable connection than $i$. Hence, given $G_{1i}=1$, trader $i$ can do no better than imitating trader $1$. Hence, $G_{1i}=1$  implies traders $\{1,\ldots,\ell_1^\ast+1\}$ indeed form a clique. A similar argument applies to all remaining players.

    The above procedure leads by construction to a BE, thereby also establishing existence of a BE network. 
\end{proof}

\section{Endogenous network}
\label{sec:endogenous-general}

In this section we allow for identical strategic traders, in terms their probability of being directly informed, when choosing their connections. 

Again, we rename players $\{1,2,...,N\}$ in a way such that $p_1\geq p_2\geq...\geq p_N$. In comparison to our main analysis, this nests the case $p_i=p_j$ for some $i,j\in \mathbb{N}$. We say $i\geq j$ if $p_i\geq p_j$. 


We introduce a notion of partial uniqueness of the BE network, $\mathbf{m}$-uniqueness. 

\begin{definition}\label{def:m-uniqueness}
    A BE network $\bar{\Gamma}^\ast$ is $\mathbf{m}$-unique if for $m=\{1,...,\mathbf{m}\}$, at least the $\mathbf{m}$ best cliques are identical in any other BE network $\bar{\Gamma}^{\ast\ast}$.
\end{definition}

The concept of $\mathbf{m}$-uniqueness captures the idea that we can characterize unique components of the equilibrium network. The intuition for why the bilateral equilibrium is unique for a strict probability ordering also translates to a partially strict probability ordering. Equilibrium multiplicity arises if a type, for whom an identical type exists is part of a clique whereas her twin is not. Trivially, replacing the two identical types also yields an equilibrium. The concept of $\mathbf{m}$-uniqueness thus characterizes the unique components of any BE network.

To characterize $\mathbf{m}$, take the highest type $i$ for whom there exists $j\in\mathbb{N}\setminus\{i\}$ with $p_i=p_j$. If no such type exists, uniqueness follows immediately from Proposition \ref{prop:endogenous}. We can characterize $\mathbf{m}$ as

\begin{equation}
    \mathbf{m}=\left\lfloor\frac{i-1}{S^\ast_1+1}\right\rfloor
\end{equation}

The index of the highest type for whom an identical player exists, $i$, is simply a number. Without losing generality, we can use the smallest index of players for whom an identical type exists. The value $S^\ast_1$ is the number of neighbors the best type establishes. The best clique is thus of size $S^\ast_1+1$. By Proposition \ref{prop:endogenous}, the best clique is also weakly larger than all other cliques. For a given value of $i$, $\left\lfloor \frac{i-1}{S^\ast_1+1}\right\rfloor$ is then the number of cliques sized $S^\ast_1+1$ we can form with $i-1$ players, which corresponds to the number of strictly ordered players and where $\lfloor\cdot\rfloor$ denotes the greatest integer less than or equal to. If $i-1<S^\ast_{1}+1$, there is potentially multiplicity in each component. 

Without fixing a specific distribution of the random variable $\tilde{r}$, we cannot pin down the optimal clique sizes and we rely on better cliques being weakly larger. The concept of $\bar{m}$-uniqueness thus establishes a lower bound to the unique components of the network.

\bibliographystyle{ecta}
\bibliography{biblio}

\end{document}